\documentclass[11pt]{article}

 \setlength{\textwidth}{6.5in} \setlength{\topmargin}{0.0in} \setlength{\headheight}{0in}
\setlength{\headsep}{0.0in}
\setlength{\textheight}{9in} \setlength{\oddsidemargin}{0in} \setlength{\evensidemargin}{0in}

\usepackage{cite}
\usepackage{amssymb,amsfonts,amsthm,amsmath}
\usepackage{graphicx}
\usepackage{xspace}
\usepackage{float}
\usepackage{url}

%\usepackage{amssymb,amsfonts,amsthm}
%\usepackage[cmex10]{amsmath}
%\interdisplaylinepenalty=2500
% A popular package from the American Mathematical Society that provides
% many useful and powerful commands for dealing with mathematics. If using
% it, be sure to load this package with the cmex10 option to ensure that
% only type 1 fonts will utilized at all point sizes. Without this option,
% it is possible that some math symbols, particularly those within
% footnotes, will be rendered in bitmap form which will result in a
% document that can not be IEEE Xplore compliant!
%
% Also, note that the amsmath package sets \interdisplaylinepenalty to 10000
% thus preventing page breaks from occurring within multiline equations. Use:
%\interdisplaylinepenalty=2500
% after loading amsmath to restore

%\usepackage{algorithm,algorithmic}

%%%%% Set \doctype to 1 for FULL version
%%%%% Set \doctype to 2 for EXT ABS version

\def\doctype{2}

\def\tsubmission{2}
\ifnum\doctype=\tsubmission
	\newcommand{\full}[1]{}
	\newcommand{\submit}[1]{#1}
\else
	\newcommand{\full}[1]{#1}
	\newcommand{\submit}[1]{}
\fi

\newtheorem{theorem}{Theorem}
\newtheorem{lemma}[theorem]{Lemma}

\theoremstyle{definition}
\newtheorem{definition}{Definition}
\newtheorem{corollary}[theorem]{Corollary}

\newcommand{\etal}{et al.\ }
\newcommand{\eps}{\epsilon}

\newcommand{\cM}{\mathcal{M}}

% Problems
%\newcommand{\gmssc}{\textsf{GMSSC}\xspace}

% Algorithms

%\newenvironment{proofof}[1]{\smallskip\noindent{\bf Proof of #1}}%
%        {\hspace*{\fill}$\Box$\par}

\newcommand{\initOneLiners}{%
    \setlength{\itemsep}{0pt}
    \setlength{\parsep }{0pt}
    \setlength{\topsep }{0pt}
%      \usecounter{myLISTctr}
}

        {\hspace*{\fill}$\Box$\par}

\usepackage[linesnumbered, ruled]{algorithm2e}
\SetKwRepeat{Do}{do}{while}%

\begin{document}
%
% paper title
% can use linebreaks \\ within to get better formatting as desired
\title{An $O(\log \log m)$-competitive Algorithm for Online Machine Minimization}

% author names and affiliations
% use a multiple column layout for up to three different
% affiliations
%\author{\IEEEauthorblockN{Sungjin Im, Bejamin Moseley, Kirk Pruhs and Clifford Stein}
%\IEEEauthorblockA{Electrical Engineering and Computer Engineering\\
%University of California \\
%Merced, CA 95343\\
%Email:  \{mnaghshnejad, sim3, msinghal\}@ucmerced.edu
%}
%}

% conference papers do not typically use \thanks and this command
% is locked out in conference mode. If really needed, such as for
% the acknowledgment of grants, issue a \IEEEoverridecommandlockouts
% after \documentclass

% for over three affiliations, or if they all won't fit within the width
% of the page, use this alternative format:
% 

\author{
Sungjin Im\thanks{Electrical Engineering and Computer Science, University of California, Merced, CA 95344, USA. Sungjin Im was supported in part by NSF grants CCF-1617653 and CCF-1409130. } \and
Benjamin Moseley\thanks{Washington University in St. Louis, 
	St. Louis, MO 63130, USA, Benjamin Moseley was supported in part by a Google Research Award, a Yahoo Research Award and NSF Grant CCF-1617724.
} \and
Kirk Pruhs\thanks{Department of Computer Science, 
University of Pittsburgh, Pittsburgh, PA 15260, USA. Kirk Pruhs was supported in part by NSF grants CCF-1421508 and CCF-1535755, and an IBM Faculty Award.} \and
Clifford Stein\thanks{Department of Industrial Engineering and Operations Research, Columbia University, New York, NY 10027, USA. Clifford Stein was supported in part by NSF grant CCF-1421161.}
}

% use for special paper notices
%\IEEEspecialpapernotice{(Invited Paper)}

% make the title area
\maketitle

\begin{abstract}
%\boldmath

This paper considers the online machine minimization problem, a basic real time scheduling problem.  The setting for this problem consists of $n$ jobs that arrive over time, where each job has a deadline by which it must be completed. 
%All jobs must be completed by their deadline by being processed preemptively on a set of identical machines. 
The goal is to design an online scheduler that feasibly schedules the jobs on a nearly minimal number of machines.
An algorithm is $c$-machine optimal if the algorithm will feasibly schedule a collection of jobs on $c\cdot m$ machines if there exists a feasible schedule on $m$ machines. 
For over two decades the best known result was a $O(\log P)$-machine optimal algorithm,
where $P$ is the ratio of the maximum to minimum job size.  In a recent breakthrough,  a $O(\log m)$-machine optimal algorithm was given.  
In this paper, we exponentially improve on this recent result by giving a $O(\log \log m)$-machine optimal algorithm.

%We give an online algorithm that feasibly schedules all jobs by their deadlines on $O(m \log \log m)$ machines if those jobs are feasibly schedulable on $m$ machines.
\end{abstract}
% IEEEtran.cls defaults to using nonbold math in the Abstract.
% This preserves the distinction between vectors and scalars. However,
% if the conference you are submitting to favors bold math in the abstract,
% then you can use LaTeX's standard command \boldmath at the very start
% of the abstract to achieve this. Many IEEE journals/conferences frown on
% math in the abstract anyway.

% no keywords

% For peer review papers, you can put extra information on the cover
% page as needed:
% \ifCLASSOPTIONpeerreview
% \begin{center} \bfseries EDICS Category: 3-BBND \end{center}
% \fi
%
% For peerreview papers, this IEEEtran command inserts a page break and

\section{Introduction}

%\subsection{Our Results}

In a typical real time scheduling environment, there is a collection of jobs that arrive over time. This collection of jobs could be generated by a task system.   Each job has a processing time and a deadline, and must be processed by its deadline.  In such a setting, there are typically two types of results in the literature.  One type of result
is the design of a scheduler, and an analysis  that shows that this scheduler can complete all jobs by their deadlines if the job instance satisfies certain conditions.  The other type of result is the design of a feasibility test, and an analysis  that shows that this test will either determine whether a particular scheduler will feasibly schedule any job instance that might arise from  a particular task system on some collection of machines, or determine that there is some job instance that the scheduler will not feasibly schedule on some collection of machines.

One classic scheduling result, of the first type, is that the Earliest Deadline First (EDF) scheduling algorithm is optimal for deadline scheduling
on one machine. That is, given a collection of jobs for which there exists a feasible schedule on one machine, EDF will feasibly schedule that collection of jobs on one machine.  This result is one of the main reasons EDF is widely used in the real time scheduling literature~\cite{LiuL73,BaruahF06,BaruahF07,ChenC11,AhujaLM16,AnderssonT07}.

Meeting all deadlines becomes more challenging when the jobs can be scheduled on a set of $m$ identical machines.  It is known that no optimal online algorithm exists for more than one machine~\cite{DertouzousMok}. That is, for every online scheduler, and for every $m > 1$, there is a collection of jobs that is feasibly schedulable on $m$ machines, but that this scheduler will not feasibly schedule on $m$ machines. This impossibility result has naturally led to a line of research that involves seeking online algorithms that are near-optimal.

The type of near-optimal algorithm that this paper is concerned with is a {\em $c$-machine optimal algorithm}.  We say that an algorithm is $c$-machine optimal if the algorithm will feasibly schedule a collection of jobs on $c\cdot m$ machines if there exists a feasible schedule on $m$ machines.   
The goal of this line of research is to determine how small a machine augmentation parameter $c$ is attainable by an online algorithm.
Determining whether there exists an $O(1)$-machine optimal\footnote{ A function $g(x)$ is $O(f(x))$ if there exists a constant $c$ and any value $x_0$ such that $g(x) \leq cf(x)$ for all $x \geq x_0$.  In particular, an $O(1)$-machine optimal algorithm uses at most $cm$ machines for some constant $c$.} algorithm is considered to be a big open problem in this line of research~\cite{PSTW,PruhsST04}.  

The concept of a $c$-machine optimal algorithm can be related to scheduling task systems in real-time scheduling as follows. If there exists an algorithm that can feasibly schedule jobs generated by a task system on $m$ machines then a $c$-machine algorithm will feasibly schedule the jobs from the task system on $cm$ machines.  In particular,  a $c$-machine algorithm can be used to schedule an infinite set of jobs generated by a task system feasibly on $cm$ machines so long as some algorithm  can feasibly schedule the jobs on $m$ machines.

An important parameter of a job is its relative laxity, which is the job's laxity divided by the length of its lifespan. (The length of its lifespan is its deadline minus its release time,  and
its laxity is the length of its lifespan minus its size.) It is relatively straightforward to observe that if all jobs have relative laxity $\Omega(1)$\footnote{ A function $g(x)$ is $\Omega(f(x))$ if there exists a constant $c$ and any value $x_0$ such that $g(x) \geq cf(x)$ for all $x \geq x_0$.},  then EDF is $O(1)$-machine optimal. See \cite{ChenMS16} for details. Unfortunately, the problem is much more challenging when jobs have smaller relative laxity.  For over two decades the best known result, when there is no restriction on the laxity, was a $O(\log P)$-machine optimal algorithm, where $P$ is the ratio of the maximum to minimum job size \cite{PSTW}.    Essentially, Phillips \etal \cite{PSTW} observed that $O(\log P)$-machine augmentation trivially reduces the general problem to the easy special case that all jobs have almost the same size.   The bound of $O(\log P)$ also has the disadvantage of being dependent on the input data; bounds that are independent of the input data are much stronger.

In a recent major advance, Chen \etal \cite{ChenMS16} gave a novel online algorithm, which we call CMS after the authors' initials.  Their analysis showed that their algorithm is
$O(\log m)$-machine optimal for jobs with relative laxity less than  $1/2$.  The algorithm and analysis are somewhat complex, but the  underlying intuition of the CMS algorithm design is to prioritize jobs that have used the largest fraction of their original laxity. Thus, by running EDF on jobs with relative laxity more than $1/2$ on half the machines, and by running the CMS algorithm on jobs with relative laxity at most $1/2$ on half of the machines, the work \cite{ChenMS16}
obtains an $O(\log m)$-machine optimal algorithm for arbitrary instances. The work of \cite{AzarCohen2017} improved on this slightly by observing that one can combine EDF and the CMS algorithm  somewhat more cleverly to obtain a $O(\frac{\log m}{\log \log m})$-machine optimal algorithm. 

%Kirk writes: I thought this was redundant with previous and future stuff. 
%The question remains, is there an $O(1)$-machine optimal algorithm?  Resolving this question is both of practical and theoretical interest.   It has been a major target in the scheduling algorithm design community.  In the applied real time community, such an algorithm is necessary to design tighter algorithms for multiple machine feasibility tests.
%\textbf{SJ: Do we mean online feasiblity test? In other words, should the algorithm be online? If we're talking about offline feasiblity test, there is a simple offline algorithm that tests a given of jobs are scheduleable on m machines.}

\subsection{Our Results}

Our main result is an exponential improvement on the machine augmentation parameter achieved in  \cite{ChenMS16,AzarCohen2017}. We give a new algorithm (called Algorithm A in this paper) and analysis showing that Algorithm A is $O(\log \log m)$-machine optimal. 

Our algorithm is constructed from several  building blocks. The initial insight that led to our main result, and the first building block, is the observation that the algorithm Shortest Job First (SJF) is $O(1)$-machine optimal if all jobs have approximately the same relative laxity. More precisely, we show the following.  This proof of the following lemma is given later in the paper.

\begin{lemma}
\label{lem:initial}
For a collection of jobs with relative laxities in the range 
$[\lambda_1, \lambda_2] \subseteq (0, 1/2]$, Shortest Job First is $O( \log_{1 / \lambda_2} \lambda_2/\lambda_1 )$-machine optimal for any $\lambda_1, \lambda_2$. 
\end{lemma}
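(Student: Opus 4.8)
The plan is to proceed by contradiction: fix $c:=\Theta(\log_{1/\lambda_2}(\lambda_2/\lambda_1))$ with a sufficiently large hidden constant, take an instance whose jobs all have relative laxity in $[\lambda_1,\lambda_2]\subseteq(0,1/2]$ and that is feasible on $m$ machines, and assume SJF (run preemptively, breaking ties by a fixed secondary priority) misses some deadline on $cm$ machines. First I would unpack the laxity hypothesis into scale facts: writing $\ell_j=d_j-r_j$, the condition $1-p_j/\ell_j\in[\lambda_1,\lambda_2]$ is equivalent to $p_j\le\ell_j\le p_j/(1-\lambda_2)\le 2p_j$ together with the slack bound $\ell_j-p_j\ge\lambda_1\ell_j\ge\lambda_1 p_j$. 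So every lifespan is within a factor $2$ of the corresponding processing time, and SJF is, up to that factor, the same as ``shortest lifespan first''; I would keep this as guiding intuition but argue in terms of $p_j$.

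The workhorse is a blocking lemma: if SJF misses job $j$'s deadline on $cm$ machines, then there is a set $B(j)\subseteq[r_j,d_j]$ of measure more than $\ell_j-p_j\ge\lambda_1\ell_j$ during which all $cm$ machines process jobs smaller than $j$. Since any such smaller job has lifespan at most $p_j/(1-\lambda_2)\le 2p_j$ and meets $[r_j,d_j]$, it lies entirely inside the window $W(j):=[r_j-2p_j,\,d_j+2p_j]$, of length at most $5\ell_j$. Hence the total processing volume of jobs smaller than $j$ lying in $W(j)$ exceeds $|B(j)|\cdot cm>\lambda_1\ell_j\,cm\ge\tfrac{\lambda_1 c}{5}\,m\,|W(j)|$, and comparing this with the feasibility inequality ``every window $I$ holds at most $m|I|$ volume of jobs contained in $I$'' already shows SJF is $O(1/\lambda_1)$-machine optimal -- the classical order of magnitude, but not what we are after.

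The crux -- step three -- is to amplify this by bootstrapping the blocking lemma through $\Theta(\log_{1/\lambda_2}(\lambda_2/\lambda_1))$ geometrically decreasing size scales. Starting from a missed job $j_0$ at scale $p_0$, I would build a nested chain of windows $W_0\supseteq W_1\supseteq\cdots$ and thresholds $p_0>p_1>\cdots$ with $p_{i+1}$ a factor $\Theta(1/\lambda_2)$ below $p_i$, maintaining the invariant that $W_i$ is ``overloaded at scale $p_i$'' -- it carries substantially more volume of jobs of size $<p_i$ than $m|W_i|$ permits once we also deduct the volume OPT is already forced to spend inside $W_i$ on the previously selected ancestor jobs $j_0,\dots,j_{i-1}$. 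The laxity range is what bounds the length of this chain: a job of size $q$ has an associated window of length $\Theta(q)$ but must absorb slack $\Theta(\lambda_1 q)$ inside it, so as the scale falls by $1/\lambda_2$ per step the recursion can run for $\Theta(\log_{1/\lambda_2}(\lambda_2/\lambda_1))$ steps before the scale is too small to matter, and the overload accumulated along the way contradicts $m$-machine feasibility. One clean way to package this is as an induction on the width of the laxity interval, proving $f(\lambda_1,\lambda_2)\le f(\lambda_1/\lambda_2,\lambda_2)+O(1)$ with base case $f(\lambda_2,\lambda_2)=O(1)$, where $f(\cdot,\cdot)$ is the machine augmentation SJF needs.

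The hard part will be making step three's amplification actually amplify: a naive pigeonhole that relocates the next-scale blocking jobs into a $1/\lambda_2$-shorter sub-window loses precisely as much volume as it gains in window length, so a finer accounting is required -- one that picks $W_{i+1}$ as the sub-interval of $W_i$ where SJF's congestion is heaviest and simultaneously tracks the volume OPT owes to $j_0,\dots,j_i$ there -- so that the overload ratio strictly increases at each level. Subsidiary technicalities I would expect to handle in passing are jobs straddling a window boundary (their lifespans are short, so their total volume is only $O(p_i\,m)$ and can be folded into the constants) and ties among equal-size jobs (absorbed by the fixed secondary priority used to define SJF).
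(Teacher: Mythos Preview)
Your first two moves --- the blocking set $B(j)$ of measure $>\lambda_1\ell_j$, the containment of all blockers in a window of length at most $5\ell_j$, and the resulting $O(1/\lambda_1)$ bound --- are correct and in fact coincide exactly with the paper's setup. But the proposal stalls at the decisive step: you yourself flag that the na\"ive recursion ``loses precisely as much volume as it gains in window length,'' and the fix you sketch (pick a sub-window where congestion is heaviest while tracking OPT's obligations to $j_0,\dots,j_i$) is only a wish, not an argument. No invariant is stated that provably strengthens from level $i$ to level $i+1$, and the suggested functional inequality $f(\lambda_1,\lambda_2)\le f(\lambda_1/\lambda_2,\lambda_2)+O(1)$ is left as a slogan without any mechanism that produces, from the failure on the original instance, a failure on a narrower-laxity subinstance.

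The paper does not attempt this direct bootstrapping at all. Instead it packages your blocking data as a \emph{weakly $(\mu,\beta)$-critical pair}: with $G$ the set of blockers and $T$ the blocking times, one has $\mu=m$ (every $t\in T$ is covered by $m$ jobs of $G$) and, since $\sum_{i\in G}\ell_i\le \lambda_2\sum_{i\in G}|I(i)|\le 5\lambda_2 m^*|I(j)|$ while $|T|\ge\lambda_1|I(j)|$, also $\beta=\lambda_1/(5\lambda_2)$. It then invokes a separately proved lower bound (their Theorem~\ref{thm:our-2}): any weakly $(\mu,\beta)$-critical pair of $\alpha$-tight jobs forces $m^*=\Omega\bigl(\mu/\log_{1/(1-\alpha)}(1/\beta)\bigr)$. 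Since all jobs here are $(1-\lambda_2)$-tight, the logarithm has base $1/\lambda_2$ and the lemma follows in one line. All the amplification you are trying to improvise in step three is already inside that theorem, whose proof builds $\lceil 2m^*/\alpha\rceil$ pairwise-disjoint layers $S_1,\dots,S_k$ of $\alpha$-tight jobs with nested lifespan unions and shows $|I(S_k)|/|I(S_1)|\ge \Theta(1/(1-\alpha))$ --- a different and more structured recursion than your size-scale descent. If you want a self-contained proof, that is the lemma you actually need to establish.
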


The second building block is that an implication of Lemma \ref{lem:initial} is that 
SJF is $O(1)$-machine optimal if the relative laxities
of the jobs lie in the range $[1/2^{2^{i+1}}, 1/2^{2^i}]$, for some  $i \ge 1$. 
This leads to an algorithm with machine augmentation doubly logarithmic in the inverse of the minimum relative laxity of any job. More precisely, we show that:

\begin{lemma}
\label{lem:second}
Consider some instance where the relative laxities lie in the range $[1/R, 1/2]$.
There is a $O(\log \log R)$-machine
optimal algorithm.
\end{lemma}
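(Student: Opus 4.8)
The plan is to build the algorithm by partitioning the jobs into laxity classes that match the dyadic-of-dyadic ranges flagged by the second building block, run SJF independently on each class, and then argue that the total machine blowup is the sum of a constant factor (from Lemma~\ref{lem:initial} applied to each class) and a factor equal to the number of classes (from running the classes on disjoint sets of machines). First I would set $k = \lceil \log\log R\rceil$ and define, for $i = 1, \dots, k$, the class $C_i$ consisting of all jobs whose relative laxity lies in $[1/2^{2^{i+1}}, 1/2^{2^i}]$. These intervals cover $[1/R, 1/2]$ up to rounding, so every job lands in some $C_i$; jobs at the boundary can be assigned to either adjacent class. Note that for each fixed $i$, the ratio of the endpoints of the $i$-th interval is $2^{2^{i+1}}/2^{2^i} = 2^{2^i}$, and $\log_{1/\lambda_2}(\lambda_2/\lambda_1) = \log_{2^{2^i}} 2^{2^i} = 1$, so Lemma~\ref{lem:initial} gives that SJF is $O(1)$-machine optimal on $C_i$ — this is exactly the second building block, made precise.

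Next I would verify the key structural fact that makes the partition useful: if the whole instance is feasible on $m$ machines, then the sub-instance $C_i$ is feasible on $m$ machines (any feasible schedule for the whole instance, restricted to the jobs of $C_i$, is a feasible schedule for $C_i$). Hence, by Lemma~\ref{lem:initial} applied to $C_i$, SJF feasibly schedules $C_i$ on $c\cdot m$ machines for some universal constant $c$. Now the algorithm for Lemma~\ref{lem:second} is: reserve $k$ disjoint groups of $cm$ machines each, and on the $i$-th group run SJF on the jobs of $C_i$. Every job is completed by its deadline, and the total number of machines used is $k \cdot cm = O(m\log\log R)$, as claimed.

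The main thing to get right — and where I expect the bookkeeping to concentrate — is the boundary handling in the laxity partition and the fact that Lemma~\ref{lem:initial} is stated for laxity ranges inside $(0,1/2]$, so I must make sure every $C_i$ really sits in that range (it does, since $2^{2^i}\ge 2$ for $i\ge 1$, giving upper endpoint $1/2^{2^i}\le 1/2$) and that rounding $R$ up to a power of $2^{2^{i+1}}$ does not cost more than a constant in the laxity ratio, hence at most an additive $O(1)$ to the class count. A secondary point is to confirm that splitting across disjoint machine groups is legitimate: since SJF on class $C_i$ never needs to know about jobs outside $C_i$, the $k$ sub-schedules can be run fully in parallel on their own machines with no interaction, so feasibility of each piece yields feasibility of the combined schedule. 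Everything else is the routine observation that restriction of a feasible schedule is feasible.
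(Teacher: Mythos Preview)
Your proposal is correct and matches the paper's proof essentially line for line: the same doubly-exponential laxity partition $[1/2^{2^{i+1}}, 1/2^{2^i}]$, the same application of Lemma~\ref{lem:initial} to get $\log_{2^{2^i}} 2^{2^i}=1$ and hence $O(1)$-machine optimality per class, and the same conclusion that running SJF on disjoint machine groups yields $O(m\log\log R)$ machines in total. Your extra care about boundary handling and the legitimacy of restricting a feasible schedule to a sub-instance is sound and, if anything, more explicit than the paper's own proof.
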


\begin{proof}
Partition the jobs into $\lg \lg R$ different groups, 
where the jobs in group $i$ have relative laxities in the range $[1/2^{2^{i+1}}, 1/2^{2^i}]$. Use SJF to run the jobs in each group on $O(m)$ machines dedicated to that group.   By Lemma~\ref{lem:initial} the total number of machines per group required is $O( m \log_{ 2^{2^i}} (2^{2^{i+1}}/2^{2^{i}}) ) = O( m \log_{ 2^{2^i}} 2^{2^{i}} ) = O(m)$.  Thus, at most $O(m\lg \lg R)$ machines are needed to ensure all job are completed by their deadlines.
\end{proof}

The final building block is the observation that by tweaking the analysis in \cite{ChenMS16}, one can show that the CMS algorithm is $O(1)$-machine optimal if the relative laxities of the jobs are all at most $1/m$. The jobs with relative laxities at least 1/2 are scheduled on separate $O(m)$ machines using EDF as in the work by Chen \etal \cite{ChenMS16}. Putting all these building blocks together, we obtain our main result, stated in Theorem \ref{thm:main}, that the following algorithm $A$ is $O(\log \log m)$-machine optimal.  See Section~\ref{sec:alg} for full the description of the algorithm $A$.

%The main result of this paper is the following.
\begin{theorem}
	\label{thm:main}
	There  is a $O(\log  \log m)$-machine optimal algorithm. 
\end{theorem}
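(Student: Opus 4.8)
The plan is to have Algorithm~$A$ split the arriving jobs into three classes according to relative laxity --- those with relative laxity in $[1/2,1)$, in $[1/m,1/2)$, and in $[0,1/m)$ --- and to run a separate subroutine for each class on its own dedicated block of machines. The key structural fact that makes such a decomposition work is that the restriction of any feasible $m$-machine schedule to a subset of the jobs is again a feasible $m$-machine schedule for that subset; hence if, for each class $j$, the corresponding subroutine is $c_j$-machine optimal, then dedicating $c_j m$ machines to class $j$ and taking the disjoint union of the three schedules feasibly schedules the entire instance whenever it is feasible on $m$ machines. It therefore suffices to show that each $c_j = O(\log\log m)$ and then sum the three machine counts.

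For the first class (relative laxity at least $1/2$), I would invoke the classical observation --- already used in \cite{ChenMS16} --- that EDF is $O(1)$-machine optimal on jobs of relative laxity $\Omega(1)$; this disposes of the first class on $O(m)$ machines. For the second class (relative laxity in $[1/m,1/2)$), I would apply Lemma~\ref{lem:second} with $R=m$: the group-partitioning-plus-SJF algorithm from that lemma is $O(\log\log m)$-machine optimal and uses $O(m\log\log m)$ machines. This is the only one of the three classes that contributes more than a constant factor of machine augmentation, and it is where the $\log\log m$ in the final bound comes from.

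The remaining work is the third class (relative laxity at most $1/m$), and here the claim to be established --- the last building block sketched in the introduction --- is that the CMS algorithm of \cite{ChenMS16} is $O(1)$-machine optimal once every job has relative laxity at most $1/m$. I would prove this by revisiting the analysis of \cite{ChenMS16} and isolating exactly where their argument loses a $\Theta(\log m)$ factor: morally, a job can be accounted for across roughly $\log m$ distinct laxity scales as its remaining relative laxity decays from (up to) $1/2$ down toward $0$, and the relevant charging sum ranges over these scales. When every job instead begins with relative laxity already below $1/m$, that range of scales has length $O(1)$, so the same sum telescopes to $O(1)$ rather than $O(\log m)$, leaving the rest of the CMS analysis intact. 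Carrying out this re-derivation carefully --- pinning down precisely which quantity in \cite{ChenMS16} the $\log m$ tracks and verifying that it is governed by the ratio of the largest to the smallest relevant laxity, which is now constant --- is the main obstacle; it is essentially the only genuinely new ingredient beyond Lemmas~\ref{lem:initial} and~\ref{lem:second}. This class is then handled on $O(m)$ machines.

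Putting the three pieces together, Algorithm~$A$ (described in full in Section~\ref{sec:alg}) feasibly schedules any instance that is feasible on $m$ machines using $O(m) + O(m\log\log m) + O(m) = O(m\log\log m)$ machines in total, and is therefore $O(\log\log m)$-machine optimal, which proves Theorem~\ref{thm:main}.
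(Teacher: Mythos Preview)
Your overall decomposition into three laxity classes and the handling of the first two classes (EDF for loose jobs, the grouped-SJF algorithm of Lemma~\ref{lem:second} for intermediate laxities) matches the paper. The gap is in the third class, where you propose to show that CMS is $O(1)$-machine optimal on jobs of relative laxity at most $1/m$ by arguing that the $\log m$ loss in \cite{ChenMS16} stems from a job's remaining relative laxity decaying across $\Theta(\log m)$ scales, and that this range collapses to $O(1)$ scales once every job starts below $1/m$.

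That intuition does not match the structure of the CMS analysis, and the proposed fix does not go through. The $\log m$ factor in \cite{ChenMS16} does not come from a per-job charging over laxity scales; rather, their analysis exhibits, upon failure on $m'$ machines, a $(\mu,1/\mu)$-critical pair with $\mu=m'+1$, and the $\log m$ is precisely the $\log(1/\beta)$ term in the lower bound Theorem~\ref{thm:kevin-1}. Restricting the \emph{initial} relative laxity to at most $1/m$ does not change $\beta$: the critical pair still has $\beta=1/\mu$, and a job that starts at relative laxity $1/m$ can still see its remaining budget decay through arbitrarily many further scales toward zero. So your argument, as sketched, does not eliminate the logarithm.

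The paper's route is different: it leaves the CMS analysis untouched and instead strengthens the lower bound. Theorem~\ref{thm:our-1} (via the packing argument in Lemma~\ref{lem:lowergamma}) replaces the base-$2$ logarithm of Theorem~\ref{thm:kevin-1} by base $1/(1-\alpha)$. When every job has relative laxity at most $1/m$, all jobs are $(1-1/m)$-tight, so the base is $m$ and $\log_m(1/\beta)=\log_m\mu=O(1)$ for $\mu=\Theta(m^*)$, yielding the desired contradiction. This strengthened lower bound---not a re-reading of the CMS charging---is the genuinely new technical ingredient; the companion Theorem~\ref{thm:our-2} is equally what powers the SJF analysis behind Lemma~\ref{lem:initial}, so it is the core of the paper rather than an afterthought.
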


%Our results have further implications regarding another type of resource augmentation.  So far, we have assumed one type of resource augmentation.  That is, it has been assumed that the algorithm is given extra machines over the optimal algorithm and this is known as machine augmentation.  

The results in this paper  have further implications regarding another type of resource augmentation, namely
speed augmentation, which is commonly
 used, either instead of, or in conjunction with, machine augmentation. 
In our context, an $s$-speed $c$-machine optimal algorithm would feasibly 
schedule a job instance on $cm$ machines of speed $s$ if this job instance is feasibly schedulable on $m$ machines of speed 1. Speed augmentation is widely used for designing near optimal algorithms, and in  corresponding feasibility tests \cite{LiuL73,BaruahF06,BaruahF07,ChenC11,AhujaLM16,AnderssonT07}.  
The best combined speed and machine augmentation result comes from the paper \cite{LamTo1999}. This paper  showed the existence of a $(1+\epsilon)$-speed $O(\frac{1}{\epsilon})$-machine optimal algorithm. As a corollary of our main result, we can give a doubly-exponential  improvement on the trade-off of speed and machine augmentation, stated in Corollary \ref{cor:machinespeed}:

\begin{corollary}
\label{cor:machinespeed}
There is a $(1+\epsilon)$-speed $O(\log \log \frac{1}{\eps})$-machine optimal algorithm for any $\eps >0$.
\end{corollary}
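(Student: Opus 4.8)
The plan is to reduce speed augmentation to a pure scaling of job sizes, and then invoke the machine‑augmentation building blocks we have already developed. Assume without loss of generality that $\eps \le 1$, since a faster machine only helps. Given an instance $\cI$ that is feasible on $m$ speed‑$1$ machines, form a new instance $\cI'$ by replacing each job's processing time $p_j$ with $p_j' := p_j/(1+\eps)$, keeping every release time and deadline the same. I would first record two routine facts. (i) $\cI'$ is still feasible on $m$ speed‑$1$ machines, because any feasible schedule for $\cI$ is also feasible for $\cI'$ (each job now requires strictly less work). (ii) Any online algorithm that feasibly schedules $\cI'$ on $k$ speed‑$1$ machines yields, by running the identical wall‑clock schedule, an online algorithm that feasibly schedules $\cI$ on $k$ speed‑$(1+\eps)$ machines: receiving $p_j'$ units of speed‑$1$ processing is the same as receiving $(1+\eps)p_j' = p_j$ units of speed‑$(1+\eps)$ processing, and release times and deadlines are unchanged, so a job completes by its deadline in one schedule if and only if it does in the other. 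The online reduction is legitimate because the scaling is a deterministic per‑job transformation that the algorithm performs as each job arrives.

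The payoff of the scaling is that every job of $\cI'$ has large relative laxity. If job $j$ has relative laxity $\lambda_j = 1 - \frac{p_j}{d_j - r_j}$ in $\cI$, then in $\cI'$ its relative laxity is $1 - \frac{p_j}{(1+\eps)(d_j - r_j)} = \frac{\eps + \lambda_j}{1+\eps} \ge \frac{\eps}{1+\eps} \ge \eps/2$. Hence all relative laxities of $\cI'$ lie in $[\eps/2, 1]$.

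Now I would split the jobs of $\cI'$ by relative laxity exactly as Algorithm $A$ does. Jobs with relative laxity more than $1/2$ have laxity $\Omega(1)$, so EDF schedules them on $O(m)$ machines, which is $O(1)$‑machine optimal for such jobs (as used in \cite{ChenMS16}). The remaining jobs have relative laxities in $[\eps/2, 1/2] = [1/R, 1/2]$ with $R = 2/\eps$, and this sub‑instance is still feasible on $m$ speed‑$1$ machines (it is a subset of the jobs of $\cI'$), so by Lemma~\ref{lem:second} it is scheduled online on $O(m \log\log R) = O\!\left(m \log\log \tfrac{1}{\eps}\right)$ machines. Summing, $\cI'$ is feasibly scheduled online on $O\!\left(m \log\log \tfrac{1}{\eps}\right)$ speed‑$1$ machines, and by fact (ii) this gives a feasible online schedule for $\cI$ on $O\!\left(m \log\log \tfrac{1}{\eps}\right)$ speed‑$(1+\eps)$ machines, proving the corollary.

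Since the argument is essentially a reduction, there is no serious combinatorial obstacle; the points that require care are verifying the speed‑to‑size equivalence for \emph{online} algorithms in fact (ii), and making sure the relative‑laxity split is performed on the scaled instance $\cI'$ rather than on $\cI$ — this is precisely what forces the doubly‑logarithmic term to depend on $1/\eps$ instead of $m$.
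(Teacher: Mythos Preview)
Your proof is correct and follows essentially the same approach as the paper: the paper's one-line justification is exactly your observation that on a $(1+\eps)$-speed machine every job has relative laxity at least $\frac{\eps}{1+\eps}$, after which the EDF/Lemma~\ref{lem:second} split you describe yields the $O(\log\log \tfrac{1}{\eps})$ bound. Your explicit instance-scaling formulation of speed augmentation and the care you take with the online reduction simply spell out what the paper leaves implicit.
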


The proof of Corollary \ref{cor:machinespeed} follows from the observation that
on a $1+\epsilon$ speed machine, every job (that is feasibly schedulable on a speed 1 machine) has relative laxity at least $\frac{\epsilon}{1+\epsilon}$. 

\noindent \textbf{Application to Task Systems:} We now comment briefly on the application of these results to periodic/real-time scheduling. On the positive side, as these results apply to all jobs instances, they apply to job instances that arise from periodic task systems. A corollary to our results is that if every job instance arising from a particular  task system can be scheduled on $m$ machines, then algorithm $A$ will schedule every job instance arising from this task system on $O(m \log \log m)$ machines. On the negative side, in the context of real-time scheduling, one generally also wants a corresponding efficient feasibility test that matches the optimality result. In our context, such a test would take as input  a task system and a number of machines $m$, and would determine whether the algorithm $A$
will feasibly schedule any job instance that might arise from this task system on 
$O(m \log \log m)$ machines, or determine that there is some job instance that might arise from this task system that is not feasibly schedulable on $m$ machines. As an example using speed augmentation, the paper \cite{PSTW} showed that EDF is 2-speed optimal, and the paper
\cite{BonifaciMS12} extended this speed-augmentation optimality result to  an efficient feasibility test for EDF for speed 2 processors. That is, this test determines   whether every job instance arising
from a task system will be feasibly scheduled by EDF on $m$ processors of speed 2, or determines that some job instance arising from the tasks system is not feasibly schedulable on $m$ processors of speed 1. Unfortunately, machine augmentation is more combinatorially complicated than speed augmentation, and we do not yet know how to extend our machine-augmentation optimality result to an efficient feasibility test. The best we can say is that an optimality result is the first step toward achieving a feasibility test.  %Also on the negative side, the algorithm $A$ was designed to achieve a theoretical result, and is certainly not natural or practical. 

\section{Further Related Work}
	\label{sec:related}

\paragraph{Non-Periodic Scheduling}
In addition to solving the general deadline problem, the paper \cite{ChenMS16} handled the special cases of laminar and agreeable deadlines\footnote{In the laminar case, for any pair of jobs $i$ and $j$, either the two jobs lifespans are disjoint or one job's lifespan fully contains the other job's. In the agreeable deadline case, if job $i$ is released earlier than job $j$, then $i$ has a deadline no later than $j$.}, showing
 that their algorithm is $O(1)$-machine optimal for these job instances. Following up on this  result, Chen \etal \cite{spaaMS16} shows that there is no non-migratory $O(1)$-machine optimal algorithm. A non-migratory algorithm schedules each job on a unique machine. For the definition of laminar and agreeable deadlines, see  \cite{ChenMS16}.

%There are several results that consider both speed and machine augmentation.  Most notably, Lam et. al
The paper \cite{LamTo1999} gave an online algorithm that is 
$(2 - \frac{2 (m-1) + mp}{(m-1)(m+1) + mp})$-speed $(m+p)$-machine optimal and they  also give a slightly weaker trade-off analysis for the EDF algorithm.
The works \cite{PSTW,LamTo1999} gave lower bounds showing that there is no $(1 + o(1))$-machine optimal algorithm.

\paragraph{Real-time/Periodic Scheduling}
For a survey of standard terminology and notable results for real-time scheduling see \cite{DavisB11}. Most of the related results in the real-time literature are about
partitioned scheduling, where all jobs emanating from the same task have to run on the same machine.  The paper \cite{Mok} shows that the problem of deciding whether an implicit deadline task set is feasible on a certain number of machines is NP-hard. The paper \cite{ChenChakraborty} shows that it is NP-hard to differentiate between  implicit deadline task systems that are feasible on 2 machines from those that require 3 machines. 
  The paper \cite{BaruahF06} gives a partition algorithm that guarantees feasibility on speed 3 machines for a constrained deadline task system if there is a partition that is feasible on speed 1 machine.  The paper \cite{ChenChakraborty,Baruah11} provide polynomial-time approximation schemes for some special cases when speed augmentation is adopted. The papers \cite{ChenChakraborty,ChenBansalChakraborty} rule out the existence of asymptotic approximation schemes for certain types of task systems. Finally, the paper \cite{ChenBansalChakraborty} provides polynomial time partitioning algorithms whose approximation ratios are a function of the maximum ratio of the period to the deadline.

\section{Formal Problem Definition and Notations}
	\label{sec:problem}

A (finite) set of $n$ jobs arrive over time to be scheduled on $m$ identical  machines/processors.  These jobs could be  generated by a task system.  Each job $j$ has size $p_j$ and deadline $d_j$. 
 The online scheduler only learns of job $j$ when $j$ arrives at its release time $r_j$. This paper assumes that all job characteristics, $p_j$, $d_j$, and $r_j$, are integers. Each machine can process at most one job at a time, and no job can be processed at the same time on two different machines. The paper considers preemptive migratory scheduling, which means that there is no further restrictions on when and where a job is processed. 
A job $j$ completes if it gets processed for $p_j$ units of time. We say that a job is \textbf{alive} at time $t$ if the job has arrived and hasn't completed at $t$.   Every job $j$ must be processed and completed within their \textbf{lifespan} (processing interval) $I(j) := (r_j, d_j)$. 
 We say that a schedule is feasible if all jobs complete within their lifespans.

A job $j$'s laxity $\ell_j$ is defined as $d_j - r_j - p_j$. In words, job $j$ may spend no more than  $\ell_j$ time steps during its lifespan not being processed in a feasible schedule. Intuitively, one can think of $\ell_j$ as $j$'s budget, and the job has to pay a unit cost out of its budget when it is not processed. 
%A job violates its deadline when its budget goes negative. 
A job $j$'s  \textbf{relative laxity}, denoted as $\rho_j$, is $\ell_j/|I(j)|$, the ratio of the job's laxity to its lifespan length. We say that $j$ \textbf{covers} time $t$ if $t$ is within $j$'s lifespan, i.e. $t \in I(j)$. For a set of jobs $S$,  define $I(S) := \cup_{j \in S} I(j)$. For a finite collection $I$ of disjoint intervals,  let $|I|$ denote the total length of intervals in $I$. 

Let $m^*$ denote the minimum number of machines that admits a feasible offline schedule for a given instance. 
It can be assumeed without loss of generality  that  $m^*$ is known to the algorithm up to a constant factor using a standard doubling trick -- if our scheduling fails due to underestimating $m^*$, we simply double our estimate. See \cite{ChenMS16} for more details.  The algorithm used in this paper will be parameterized by $m^*$.

Let $\alpha$ be a scalar.  We say that job $j$ is \textbf{$\alpha$-loose} if $p_j \leq \alpha | I(j) |$, otherwise it is \textbf{$\alpha$-tight}. We say a job is simply \textbf{loose} if it is $\frac{1}{2}$-loose.
We say that a job is \textbf{very tight} if its relative laxity is at most $1/ m$.
Note that relative laxity and $\alpha$ are defined differently, so a job that is $\alpha$-loose for a large $\alpha$ actually has a small relative laxity.
%Let $R_k$ denote jobs of relative laxities $\in (\frac{1}{2^{2^{k+1}}}, \frac{1}{2^{2^k}}]$.

Whenever a machine becomes available, the algorithm Shortest-Job-First (SJF) chooses to schedule the uncompleted job $j$ with the smallest \emph{original} work, $p_j$. 
Whenever a machine becomes available, the algorithm Earliest-Deadline-First (EDF) chooses to schedule the uncompleted job $j$ with the smallest deadline, $d_j$. 
\section{Algorithm Description}
\label{sec:alg}
\noindent 

\newcommand{\algoA}{\boldsymbol{A}}
\newcommand{\cL}{\mathcal{L}}
\newcommand{\sedf}{\mathsf{edf}}
\newcommand{\ssjf}{\mathsf{sjf}}
\newcommand{\scms}{\mathsf{cms}}

This section describes our $O(\log \log m)$-machine optimal algorithm, which will be denoted as $\algoA$. The algorithm is hybrid and runs several different procedures depending on the relative laxity of jobs.  The algorithm $\algoA$ is parameterized by $m^*$, the minimum number of machines required to feasibly schedule the jobs by any (offline) algorithm. To present our algorithm more transparently, we describe our algorithm assuming that the parameter $m^*$ is known to the algorithm a priori---we will show in Section~\ref{sec:remove} how we can easily remove the assumption by using at most four times more machines. Likewise, we will show in Section~\ref{sec:remove},  $\algoA$ can be implemented without knowing parameters $m_{\sedf}$,  $m_{\ssjf}$ and $m_{\scms}$ that appear in the following; all parameters will be shown to be $O(m^*)$. 

\begin{itemize}
\item Earliest Deadline First (EDF): Jobs with relative laxity at least $1/4$ are scheduled using EDF on $m_{\sedf}$ dedicated machines. 

\item Shortest Job First (SJF): Let $\cL_i$ denote the set of jobs with relative laxity in the range of $(1/2^{2^{i+1}}, 1/2^{2^i}]$ where $i$ is an integer in the range of 
$[1, \lceil \lg \lg m^* \rceil]$; here, $\lg$ has a base of 2. For each $i$, a set of $m_{\ssjf}$ machines, $\cM_i$, are dedicated to processing jobs in $\cL_i$ using SJF.  At any point in time SJF schedules up to $m_{\ssjf}$ jobs with the smallest sizes. 
%Jobs with laxities in the range $[1/2^{2^{i+1}}, 1/2^{2^i}]$ are scheduled using SJF on $O(m)$ dedicated machines. The range of $i$ is $[1, \lg \lg m]$. There are $\lg \lg m$ different copies of SJF running  disjoint groups of jobs on $\lg \lg m$ groups of machines, with $O(m)$ machines per group. 

\item Chen-Megow-Schewoir (CMS): The remaining jobs, which have relative laxity no greater than $1/m^*$, are scheduled using the CMS algorithm \cite{ChenMS16} on $m_{\scms}$ dedicated machines.  The description of the CMS algorithm is given in the next section.
%Jobs with laxities in the rage $[0, 1/m]$ are scheduled using the CMS algorithm \cite{ChenMS16} on $O(m)$ dedicated machines. 
\end{itemize}

Note that EDF, SJF and CMS algorithms are used to process jobs with relative laxities that are high, intermediate, and low, respectively. It is important to note that the three algorithms use disjoint sets of machines. Further, the algorithm separately uses SJF for jobs in each  set $\cL_i$ using a distinct set $\cM_i$ of machines. It is easy to see that $O(m^* \log \log m^*)$ machines are used by our hybrid algorithm $\algoA$ if $m_{\sedf}$,  $m_{\ssjf}$ and $m_{\scms}$ are all $O(m^*)$.

%This completes the algorithm description.  
\subsection{Algorithm Chen-Megow-Schewoir (CMS)}

Since the algorithm CMS is not as well known as EDF or SJF, we give a full description of CMS including its pseudocode. The algorithm CMS takes as input a parameter $m_{\scms}$. The algorithm processes jobs using $m_{\scms}+1$ machines, and either outputs a feasible schedule using $m_{\scms}$ machines or declares failure. The $m_{\scms}+1$ machines are indexed by $1, 2, \cdots, m_{\scms}+1$ in an arbitrary but fixed order. The last machine $m_{\scms} +1$ is forbidden, meaning that the algorithm declares failure if it ever processes a job on the machine. Each job $j$ is initially given a budget equal to its laxity, $\ell_i = d_i  - r_i - p_i$, and its budget is equally distributed to the $m_{\scms}+1$ machines. We emphasize that the budget is never shared between machines.  Let $b_{ji}(t)$ denote $j$'s budget for machine $i$ at time $t$. Note that $b_{ji}(r_j) = \ell_j / (m_{\scms}+1)$ for all $i \in [m_{\scms}+1]$. 

We now describe how the algorithm CMS decides which jobs to schedule and which to delay at $t$ at each fixed time $t$. Consider the incomplete jobs in decreasing order of their arrival times, breaking ties in an arbitrary but fixed order.  When considering job $j$, let $i$ be the least indexed machine a job is not currently being scheduled on at the fixed time $t$. We assign $j$ to machine $i$, which doesn't necessarily mean that $i$ processes job $j$ at the moment. 
 If job $j$ has any budget left for machine $i$, i.e. $b_{ij}(t) > 0$, do \emph{not} process $j$, i.e. delay it, decreasing $b_{ij}(t)$ at a rate of 1 at the instantaneous time $t$. If the budget is empty, i.e., $b_{ij}(t) =0$, schedule job $j$ on machine $i$. After either delaying or processing $j$ on machine $i$, we consider the next incomplete job. As mentioned before, the algorithm CMS declares failure if it ever has the forbidden machine  $m_{\scms}+1 $ process a job. 

The algorithm CMS keeps the same schedule, i.e. schedule exactly the same job on the same machine, until time $t''$ when a new job arrives or a job assigned to machine $i$ completely uses its budget for machine $i$ while getting delayed.  Since the above procedure is invoked only for such events,  it is easy to see that CMS runs in polynomial time. For more details, see the pseudo-codes, Algorithms \ref{cms} and \ref{sub-cms}.

It is worth noting that it could happen that a job uses its budget for machine $i$ before depleting it budgets for lower-indexed machines, $1$, $2$, $\cdots$, $i-1$. Thus, when the algorithm declares failure, that is, processes a certain job $j$ on machine $m_{\scms}+1$ at time $t$, it must be the case that $b_{j, m_{\scms}+1}(t) = 0$, but not necessarily $b_{ji}(t) = 0$ for all $i \in [m_{\scms}]$. 

%job $i$'s $j$th budget has any budget available, do not schedule $i$ and move to job $i+1$.   If the budget is empty, schedule job $i$ on machine $j$ and recurse on job $i+1$. 
%that is the number of machines the algorithm is run on.  This is set to be $O(m)$ for an appropriately chosen constant. Each job $j$ is associated with $m'+1$ budgets of size $\ell_j/ (m'+1)$.  That is, the laxity of a job is partitioned evenly in $m'+1$ groups. 
%The algorithm uses the CMS algorithm of \cite{ChenMS16} as a subroutine. 
%This algorithm is defined as follows. 
%Fix a time $t$. The algorithm decides which jobs to schedule and which to delay at $t$.   Order the machines $m'$ in an arbitrary order $1,2,\ldots, m$.  
% Consider the incomplete jobs in decreasing order of their release dates.  When considering job $i$, let $j$ be the least indexed machine a job is not currently being scheduled on.  If job $i$'s $j$th budget has any budget available, do not schedule $i$ and move to job $i+1$.   If the budget is empty, schedule job $i$ on machine $j$ and recurse on job $i+1$. 

\begin{algorithm}
	 \KwIn{A sequence of jobs arriving online; $m_{\scms}$ machines indexed by $1, 2, \cdots, m_{\scms}$}
	 \KwOut{Either yields a feasible schedule or declares failure; a feasible schedule is always output if $m_{\scms} \geq c_{\scms} \cdot m^*$}
%    \KwInOut{Input}{The input sequence of jobs arriving online}
%  \SetKwInOut{Output}{The input sequence of jobs arriving online}
%  \KwData{The input sequence of jobs arriving online}
%  \KwResult{The input sequence of jobs arriving online}
  $t' = 0$  \footnotesize{// the latest time when the Sub-CMS was called}\normalsize{}\;
  $t = 0$   \footnotesize{ // the current time}\normalsize{}\;
  \While{$A_t := \{j \; | \; r_j \leq t, p_j(t) > 0\} \neq \emptyset$ \newline \mbox{$\quad\quad$  }\footnotesize{//$A_t$: jobs alive at time $t$} \normalsize{}\newline}{
%	\KwDataXX // \footnotesize{$A_t$ is the set of jobs alive at time $t$}\;
    $\psi \leftarrow$ Sub-CMS($A_t$, $m_{\scms}$, $\{b_{ji}(t)\}_{j \in A_t, i \in [m_{\scms}+1]}$, $\{p_j(t)\}_{j \in A_t}$)\;
    \If{$\exists j \in A_t$ such that $\psi(j, t) = m_{\scms} +1$ and $b_{j, \psi(j,t)} = 0$}
    {
    	declare failure and terminate\;
    }
            $\Delta_1 = \min \{ b_{j, \psi(j,t)}(t) \; | \; b_{j, \psi(j,t)} >0, j \in A_t\}$\;
    $\Delta_2 = \min \{ p_j(t) \; | \; b_{j, \psi(j,t)}  = 0, j \in A_t\}$\;
    $\Delta = \min \{ \Delta_1, \Delta_2\}$\;
    $t' = t$\;
    $t = t' + \Delta$\;
     \If{ a new job $j$ arrives before time $t$}   
     {
     	\For{all $i \in  [m_{\scms}+1]$}
	{
		$b_{ji}(r_j) = \ell_j / (m_{\scms} +1)$\;
	}
     	$t = r_j$\;
     }
     \For{all $j \in A_{t'}$ such that $b_{j, \psi(j,t')} >0$}{
     	 $b_{j, \psi(j,t')}(t) = b_{j, \psi(j,t')}(t') - (t - t')$\;
     }
     \For{all $j \in A_{t'}$ such that $b_{j, \psi(j,t')} =0$}{
     	$p_j(t) = p_{j}(t') -  (t - t')$\;
     }
 }
  \caption{Algorithm Chen-Megow-Schewoir (CMS)}
 \label{cms}
\end{algorithm}

\begin{algorithm}
	 \KwIn{$A_t$; machines $1, 2, \cdots, m_{\scms}+1$; $b_{ji}(t)$ for all $j \in A_t$ and $i \in [m_{\scms}+1]$; $p_j(t)$ for all $j \in A_t$.}
	 \KwOut{$\psi(j, t)$ for all all jobs $j \in A_t$.} 
	 Order jobs in $A_t$ in non-increasing order of their arrival times\;
	 $i = 1$\;
	\For{each $j \in A_t$}{
	     $\psi(j, t) = i$\;
	     \If{If $b_{ij} = 0$} {
	     	$i = i+1$\;
	     }
	} 
  \caption{Algorithm Sub-CMS}
   \label{sub-cms}
\end{algorithm}

We now take a close look at CMS taking into account issues arising in its implementation. Algorithm~\ref{cms}, CMS, is described assuming that the first job arrives at time $0$ and no two jobs arrive at the same time. These assumption can be made w.l.o.g. by shifting the time horizon and breaking ties between jobs with the same arrival time in an arbitrary but fixed order. Algorithm~\ref{cms} uses Algorithm~\ref{sub-cms}, Sub-CMS, as a sub-procedure. When CMS calls Sub-CMS, it passes to the sub-procedure the set of alive jobs at the moment, $A_t$, and the number of the given machines, $m_{\scms}$, along with jobs' remaining budgets $\{b_{ji}(t)\}$ and remaining sizes $\{p_j(t)\}$; here $p_j(t)$ denotes $j$'s remaining size at time $t$. Then, Sub-CMS finds an assignment of each job to a machine at time $t$: $\psi(j, t) = i$ implies $j$ is assigned to machine $i$ at time $t$. If $j$ still has some budget left for machine $i$ at the time, i.e., $b_{j, \psi(j,t)} > 0$, then $j$ is delayed (not processed) at the moment, burning its remaining budget for machine $i$; we call such jobs \emph{inactive}. Otherwise, $j$ is processed on machine $i$ and is said to be \emph{active}. We note that the assignment ensures that each machine processes at most one job at the time. The last machine $m_{\scms}$ is `forbidden' in the sense that if the machine processes a job, then CMS declares failure and terminates as shown in Lines 5-7.

Then, CMS computes the next time step when it needs to call Sub-CMS again. There are three types of events that triggers calling Sub-CMS. The first type of  event is when an inactive job $j$ depletes its budget $b_{ji}(t) = 0$ for the machine $b_{ji}(t)$ to which it was assigned at time $t$, which is expected to happen in $b_{ji}(t) = 0$ units of time from $t$ if no other events occur before. By taking the minimum over all inactive jobs, a first-type event occurs in $\Delta_1$ units of time from $t$ if no other types of events occurs before.  The second type of event is when an active job is completed. A second-type even occurs in $\Delta_2$ units of time assuming that no other types of events occur before. Thus, CMS doesn't have to reassign jobs within $\Delta := \min\{\Delta_1, \Delta_2\}$ time steps if no jobs arrive in the future, and thus can increase the current time $t$ to $t + \Delta$, updating $t'$ as well in Lines 11 and 12.

The last type of event is when a new job arrives, which is handled in Lines 13-18. Thus, if CMS receives a new job $j$ to schedule before the next time $t$ it planned to reassign jobs by calling Sub-CMS, then it initialize $j$'s budgets in Line 15 and update $t$ to $r_j$, meaning that it needs to call Sub-CMS right now due to job $j$'s arrival. In Lines 19-21, CMS updates inactive jobs' budgets. Note that $i = \psi(j, t')$ is the machine to which $j$ was assigned and $j$ was delay burning its budget for machine $i$ at a rate of 1 at each time during time interval $[t', t]$. In Lines 22-24, CMS processes all active jobs during the same time interval and update their remaining sizes appropriately. After this update, a new assignment $\psi$ is computed via call to Sub-CMS in Line 4 if there is any alive job.

\subsection{Without Knowing the Minimum Number of Machines Admitting a Feasible Schedule}
	\label{sec:remove}

\newcommand{\shigh}{\mathsf{high}}

We gave a full description of our algorithm assuming that the algorithm knows $m^*$, the minimum number of machines admitting a feasible (offline) schedule, a priori. In this section, we show how we can remove this assumption by a simple trick of doubling the number of machines whenever we realize we underestimated the true value of $m^*$. Specifically, we show the following `conversion' lemma. 

\begin{lemma}
	\label{lem:double}
	Given a $c$-machine-optimal online algorithm that takes $m^*$ as a parameter, we can convert it into one that is $4c$-optimal without using the parameter.
\end{lemma}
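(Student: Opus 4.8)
The plan is to run infinitely many copies of the given $c$-machine-optimal algorithm in parallel, one for each power-of-two guess of $m^*$, and argue that only a constant-factor blowup over the correct guess is ever needed. Concretely, for each integer $k \geq 0$ let $\alg_k$ be the instance of the given algorithm invoked with parameter $2^k$, and allocate to $\alg_k$ a dedicated pool of $c \cdot 2^k$ machines. When a job $j$ arrives, I feed it to $\alg_k$ where $k$ is chosen adaptively: maintain a running ``current guess'' $k^\star$, initialized to $0$; route every newly arriving job to $\alg_{k^\star}$. Whenever $\alg_{k^\star}$ would declare failure --- i.e.\ it is forced to use more than $c \cdot 2^{k^\star}$ machines to meet a deadline --- we conclude (by the contrapositive of $c$-machine-optimality, applied to the sub-instance routed to $\alg_{k^\star}$) that that sub-instance is not feasible on $2^{k^\star}$ machines, hence the whole instance is not feasible on $2^{k^\star}$ machines either; so we increment $k^\star$ and henceforth route new arrivals to $\alg_{k^\star+1}$, leaving the jobs already handed to earlier copies with those copies. (Jobs already admitted to a failed copy must still be completed; see the obstacle paragraph below for how to absorb them.)

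The accounting then runs as follows. Let $k_0$ be the smallest integer with $2^{k_0} \geq m^*$, so $2^{k_0} < 2m^*$. By $c$-machine-optimality of the base algorithm and the fact that the true instance is feasible on $m^* \le 2^{k_0}$ machines, the copy $\alg_{k_0}$ never declares failure on any sub-instance it receives, so the current guess $k^\star$ never exceeds $k_0$. Hence only the copies $\alg_0, \alg_1, \dots, \alg_{k_0}$ are ever activated, and the total number of machines in use is at most $\sum_{k=0}^{k_0} c \cdot 2^k = c(2^{k_0+1}-1) < c \cdot 2^{k_0+1} < 4 c\, m^*$. Since a feasible schedule is produced --- each job is ultimately completed on time by whichever copy admitted it, and the final copy $\alg_{k^\star}$ with $k^\star \le k_0$ succeeds --- this exhibits a $4c$-machine-optimal algorithm that never refers to $m^*$, as desired.

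The main obstacle is the handling of jobs that were already handed to a copy $\alg_{k^\star}$ at the moment that copy declares failure: those jobs have deadlines that may still be in the future, and abandoning them breaks feasibility, yet the failed copy by definition cannot honor them within its machine budget. The clean fix is to observe that failure of $\alg_{k^\star}$ certifies infeasibility on $2^{k^\star}$ machines, so the doubling step is ``charged'' correctly, and then to let the already-admitted jobs be re-submitted to the new copy $\alg_{k^\star+1}$ as if freshly arriving at the failure time with their original (remaining) parameters --- this is legitimate because the base algorithm is online and the remaining sub-instance is still feasible on $2^{k^\star+1} \ge m^*$ machines once $k^\star+1 \ge k_0$. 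One must check that re-submission at the failure time does not violate any deadline in the feasible offline schedule: since the reference offline schedule on $m^*$ machines completes every job by its deadline and $2^{k^\star + 1}$ eventually reaches $2^{k_0} \ge m^*$, the re-submitted residual instance is feasible on the new pool, and $c$-optimality of the base algorithm guarantees success. (This is essentially the standard doubling trick; the reference~\cite{ChenMS16} handles the same point, and the only quantitative claim here is that the geometric sum of machine pools is bounded by $4c\,m^*$.) A minor secondary point is that, formally, ``$c$-machine-optimal'' must be read as applying to every sub-instance, not just the full input; since the property is monotone under taking sub-instances (a feasible schedule for a superset restricts to one for the subset), this causes no difficulty.
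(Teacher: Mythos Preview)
Your geometric-sum accounting and the overall doubling skeleton match the paper's proof, but the mechanism you propose for handling a failed copy is different from the paper's and contains a genuine gap.

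The paper never re-submits jobs. Instead, at each arrival it \emph{simulates} the base algorithm forward from the current state, pretending no further jobs arrive. If the simulation succeeds, the new job is accepted at the current level; if it fails, the new job alone is routed to the next level while the previously accepted jobs stay put. The invariant maintained is that the jobs committed to level $k$ can always be finished by the base algorithm on that level's machines from the current state---this was certified by the last successful simulation. Consequently every committed job is completed on time without any migration between levels.

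Your re-submission step does not work. The claim ``the re-submitted residual instance is still feasible on $2^{k^\star+1}\ge m^*$ machines once $k^\star+1\ge k_0$'' is false in general: feasibility of the \emph{original} instance on $m^*$ machines says nothing about feasibility of the residuals released at the failure time, because the offline optimum may have processed those jobs during $[r_j,\text{failure time}]$ while the failed copy did not. Concretely, a zero-laxity job $j$ that a failed copy neglected for even one time unit is already lost---no number of machines can finish it by $d_j$. There is also a related issue with how failure is detected: ``forced to use more than $c\cdot 2^{k^\star}$ machines to meet a deadline'' is only observable when a deadline is imminent, by which point time has irrevocably been wasted. The paper's forward simulation sidesteps both problems simultaneously: it detects infeasibility the moment the offending job arrives, and it never leaves a committed job stranded.

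The fix is exactly what the paper does: replace your reactive re-submission with proactive simulation, and route only the \emph{newly arriving} job (not the already-committed ones) to the next level when the simulation fails. With that change, your accounting goes through verbatim.
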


To convey the main idea more transparently and illustrate how it is used, we show this theorem for a specific algorithm EDF---however, the proof is completely oblivious to EDF, and therefore, we will have the theorem immediately. Recall that we use EDF to process jobs with relative laxity at least $1/4$, whose set is denoted as $\cL_{high}$. The following theorem shown in  \cite{ChenMS16} (Theorem 2.3),
\begin{theorem}
	If all jobs have relative laxity at least $\rho$, EDF is $1 / \rho^2$-machine optimal. 
\end{theorem}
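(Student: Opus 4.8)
The plan is to argue by contradiction: assume that for some instance that is feasible on $m$ machines, EDF run on $M$ machines is infeasible, and show that necessarily $M < m/\rho^2$. Fix a job $j$ that misses its deadline $d := d_j$, chosen so that $d$ is minimal among all missed deadlines (so every job with deadline strictly smaller than $d$ is completed on time). I would then isolate the interval that "causes" the failure: let $[t_0,d]$ be the maximal interval ending at $d$ such that at every instant inside it, either all $M$ EDF machines are processing jobs with deadline at most $d$, or $j$ itself is being processed. Since EDF prioritizes smaller deadlines, whenever $j$ is alive but not running all $M$ machines must be running jobs of deadline at most $d$; hence at time $t_0^{-}$ the job $j$ must be unreleased, so $t_0 \le r_j$, and moreover at $t_0^{-}$ some machine is idle or is running a job with deadline exceeding $d$.

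The first use of the laxity hypothesis is on $j$ itself: since $\rho_j \ge \rho$ we have $p_j \le (1-\rho)(d-r_j) \le (1-\rho)(d-t_0)$, so during $[t_0,d]$ the job $j$ is left unprocessed for more than $\rho(d-t_0)$ units of time, and throughout all of that time all $M$ machines are busy on jobs of deadline at most $d$. Consequently EDF performs strictly more than $M\rho(d-t_0)$ units of work on deadline-$(\le d)$ jobs inside $[t_0,d]$. I would split these jobs into the \emph{inside} jobs (released at or after $t_0$) and the \emph{outside} jobs (released before $t_0$): feasibility on $m$ machines bounds the total size of inside jobs by $m(d-t_0)$ (their lifespans lie inside $[t_0,d]$), so the outside jobs must be processed for more than $(M\rho - m)(d-t_0)$ units of time inside the window. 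Since at $t_0^{-}$ an EDF machine was idle or running a deadline-$(>d)$ job, EDF must at that instant be running every alive job of deadline at most $d$, and there are fewer than $M$ of them; these are exactly the outside jobs, so there are at most $M-1$ of them.

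The heart of the argument — and the place the second factor of $1/\rho$ is spent — is bounding the work the outside jobs do inside the window. The naive bound "at most $M-1$ outside jobs, each doing at most $d-t_0$ work" only loses a further factor $1/\rho$, so it is too weak, and because truncating an outside job at $t_0$ destroys its laxity one cannot simply recurse on a smaller sub-instance. Instead I would exploit the laxity of the original, untruncated outside jobs: an outside job $k$ that performs $w$ units of work after $t_0$ has $p_k \ge w$, hence $d - r_k \ge d_k - r_k \ge p_k/(1-\rho) \ge w/(1-\rho)$, so $k$ was released at least $w/(1-\rho)$ before $d$; in particular a job responsible for a large share of the outside work must have been released well before $t_0$. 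I would turn this into an extension/termination argument: repeatedly push the left endpoint of the busy window backwards to absorb the early-released outside jobs, maintaining at each stage a window $[t,d]$ together with a lower bound on the deadline-$(\le d)$ work EDF is forced to execute inside it and a laxity-based upper bound on the part of that work attributable to jobs straddling the left endpoint; tracking the constants, this eventually forces more than $m(d-t)$ units of work into some window $[t,d]$ whose jobs all fit inside it, contradicting feasibility on $m$ machines, with the blow-up required for the contradiction being exactly a factor $1/\rho^2$. The main obstacle is precisely this last step — making the backward extension terminate and balancing the two places where a factor of $\rho$ is lost (the unforced delay of $j$ itself, and the escape of the outside work), which is what pins the bound at $1/\rho^2$ rather than $1/\rho$.
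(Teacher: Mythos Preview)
This theorem is not proved in the present paper at all: it is quoted verbatim from \cite{ChenMS16} (their Theorem~2.3) and used as a black box, both in Section~\ref{sec:remove} and again in Section~\ref{sec:theoremadapt} in the equivalent $\alpha$-loose formulation (with $\alpha = 1-\rho$). So there is no ``paper's own proof'' here to compare your attempt against.

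On the merits of your proposal itself: the setup is correct. Your definition of $t_0$ is well-posed, the deduction $t_0 \le r_j$ is valid (if $j$ were alive and unprocessed at $t_0^{-}$ then EDF would have all $M$ machines on deadline-$\le d$ work, contradicting maximality), the lower bound of $M\rho(d-t_0)$ on deadline-$\le d$ work inside the window is right, the feasibility bound $m(d-t_0)$ on the inside jobs is right, and the count of at most $M-1$ outside jobs follows exactly as you say.

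The gap is precisely where you flag it. The ``backward extension'' step is only described in outline: you observe that an outside job contributing $w$ units inside the window must have $r_k \le d - w/(1-\rho)$, and you propose to iterate the window leftward, but you do not actually carry out the iteration, track the invariants, or show it terminates with the desired inequality. As you yourself note, the naive bound $(M-1)(d-t_0)$ on outside work is useless, and nothing you have written so far replaces it. Until the recursive enlargement is made explicit --- with a concrete inequality maintained at each step and a termination argument --- what you have is a plausible plan that correctly identifies where both factors of $1/\rho$ should come from, but not yet a proof.
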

\noindent
implies that if we run EDF using $c m^*$ machines, where $c = 16$, we can feasibly schedule all jobs in $\cL_{high}$. If we had known $m^*$ from the beginning, we could have dedicated $c m^*$ or more machines from time 0 and we would have successfully scheduled all jobs in $\cL_{high}$. 
%It is known that for any set $J$ of jobs with relative laxity $1/2$ that is schedulable on $m$ machines can be feasibly scheduled on $4m$ machines \cite{ChenMS16}. So, clearly the subset of input jobs with relative laxity $1/2$, which denoted as $J_{\shigh}$ is schedulable on $4m^*$ machines by EDF.  
To avoid using the knowledge of $m^*$, at a high-level, we will partition the time horizon $[0, \infty)$ online into disjoint intervals, $I_1 := [t_0:=0, t_1), I_2 := [t_1, t_2),  \cdots, I_\kappa := [t_{\kappa-1}, t_\kappa = \infty)$. Each interval $I_k$ is associated with $2^{k-1}$ machines that are exclusively dedicated to processing jobs arriving during $I_k$, which we denote as $J(I_k)$. We now describe how we define the times $t_1, t_2, \cdots$ online. Initially, we use only one machine to run EDF. Whenever a new job arrives at time $t$, we simulate EDF's schedule until we complete all jobs that have arrived pretending that no more jobs arrive. If  EDF can feasibly complete all jobs alive at the time using the single machine, then we do nothing. Otherwise, we set $t_1 =t$. As mentioned before, jobs arriving by time $t_1$ are scheduled by the initial single machine. We repeat this recursively: Say the current time $t$ is such that $t \geq t_{k-1}$ but we haven't set $t_k$ yet. When a new job $j$ arrives at time $t$, we simulate EDF's schedule pretending that no more jobs arrive and set $t_k = t$ if it fails to yield a feasible schedule for jobs that have arrived after $t_{k-1}$ using $2^{k-1}$ machines; otherwise we do nothing. 

It is clear that this algorithm always gives a feasible schedule as we use more machines whenever we need more. Therefore, it only remains to show that the number of machines we will have used at the end is not far from $m^*$. We show that $2^{\kappa-1} \leq 2c m^*$, meaning that we use at most $\sum_{k=1}^{\kappa} 2^{k-1} < 2^\kappa \leq 4c m^*$. To see this, for the sake of contradiction suppose $2^{\kappa-1} > 2c m^*$. Thus, we have $2^{\kappa-2} > c m^*$. This means that even if we dedicated more than $c m^*$ machines, we couldn't feasibly schedule all jobs arriving during $I_{\kappa-1}$ and had to use more machines from time $t_{\kappa}$. This is a contradiction to the precondition that all jobs, including those jobs arriving during $I_{\kappa-1}$, are schedulable on $c m^*$ machines. Recalling $c = 16$, we can feasibly schedule all jobs in $\cL_{high}$ using $64 m^*$ machines. As mentioned, this proof is oblivious to the algorithm, hence we have Lemma~\ref{lem:double}. 

Thus, we can use this doubling trick for each run of EDF for jobs in $\cL_{high}$, SJF for each $\cL_i$, $i \in [1, \lceil \lg \lg m^* \rceil]$, and CMS for the other jobs. 
Each run is guaranteed to find a feasible schedule for the jobs it is assigned when using $O(m^*)$ machines. Since we use at most four times more machines for each run by doubling and there are $2 + \lceil \lg \lg m^* \rceil$ runs, we use at most $O( m^* \log \log m^*)$ machines, as desired.

\section{Algorithm  Analysis}
In this section, the theoretical guarantees of algorithm $A$ are shown.  

The main challenge in analyzing the optimality of an online algorithm  is discovering strong lower bounds on $m^*$, the minimum number of machines needed to feasibly schedule a particular job instance. 
%That is, it is essential to discover good lower bounds on $m^*$ so that we may compare the number of machines the algorithm uses to this lower bound to establish the competitiveness of the algorithm.  
In subsection \ref{sec:lb1} and subsection \ref{sec:lb2}, we strengthen two lower bounds found in \cite{ChenMS16}.   

With new lower bounds on $m^*$ in place, Subsection \ref{sec:analysis}
 proves Lemma~\ref{lem:initial}.

 Subsection \ref{sec:theoremadapt} explains why CMS is $O(1)$-machine optimal
 for very tight jobs, and EDF is $O(1)$-machine optimal for loose jobs.

 Our main result, that algorithm $A$ is $O(\log \log m)$-machine optimal, follows by combining these results.

\subsection{First Lower Bound}
	\label{sec:lb1}
	
	This section gives a new lower bound on $m^*$.  To do so, the following important definition originating from \cite{ChenMS16}  is needed. 

\begin{definition}[\cite{ChenMS16}]
	Let $G$ be a set of $\alpha$-tight jobs and let $T$ be a non-empty finite \emph{union} of time intervals. For some $\mu \in \mathbb{N}$ and $\beta \in (0,1)$, a pair $(G, T)$ is called $(\mu, \beta)$-critical if 
\begin{enumerate}
	\item each time $t$ belonging to an interval in $T$ is covered by at least $\mu$ distinct jobs in $G$.  That is, $\mu$ jobs in $G$  include $t$ in their lifespans.
	\item $|T \cap I(j)| \geq \beta \ell_j$ for all $j$ in $G$. 
\end{enumerate}
\end{definition}

Based on this definition, Chen \etal \cite{ChenMS16} gave the following novel lower bound on $m^*$. 

\begin{theorem}[\cite{ChenMS16}]
	\label{thm:kevin-1}
	If there exists a $(\mu, \beta)$-critical pair, then $m^*$ is  $\Omega( \frac{\mu}{\log 1 / \beta})$.
\end{theorem}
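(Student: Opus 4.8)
The plan is to lower-bound $m^*$ by examining a fixed optimal feasible schedule $\sigma$ on $m^*$ machines and quantifying the idle time it is forced to accumulate inside $T$. At every time $t$ lying in an interval of $T$, at least $\mu$ jobs of $G$ are alive and at most $m^*$ of them can be running in $\sigma$, so at least $\mu-m^*$ of them are alive but idle at $t$. Writing $x_j$ for the amount of time $j$ is idle within $T\cap I(j)$ in $\sigma$ (so $x_j\le\ell_j$ always), and integrating over $T$,
\[
\sum_{j\in G}x_j\ \ge\ (\mu-m^*)\,|T|.
\]
If $m^*\ge\mu/2$ we are done, so from now on assume $m^*$ is a sufficiently small fraction of $\mu$, which makes the right-hand side $\Omega(\mu|T|)$.

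Next I would separate out the jobs for which the bound is easy. Call $j$ \emph{heavy} if $|T\cap I(j)|\ge 2\ell_j$; such a job must be processed for at least $|T\cap I(j)|-\ell_j\ge\tfrac12|T\cap I(j)|$ inside $T$, and since the total processing $\sigma$ performs inside $T$ is at most $m^*|T|$, the heavy jobs together overlap $T$ — hence contribute to $\sum_j x_j$ — for only $O(m^*|T|)$ time. So the remaining \emph{light} jobs, those with $\beta\ell_j\le|T\cap I(j)|<2\ell_j$ (the lower bound being exactly the second condition in the definition of a critical pair), still satisfy $\sum_{\text{light }j}x_j=\Omega(\mu|T|)$. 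The crucial feature of a light job $j$ is that it need not run inside $T$ at all; but whatever it does not run inside $T$ it must run inside $I(j)\setminus T$, namely at least $p_j-\bigl(|T\cap I(j)|-x_j\bigr)$ units there. Since $j$ is $\alpha$-tight, $p_j>\alpha|I(j)|$, so the light jobs are collectively forced to perform a large total amount of processing inside $\bigcup_j\bigl(I(j)\setminus T\bigr)$, and $m^*$ is at least that total divided by the length of the region over which it is spread.

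To turn this into $\Omega(\mu/\log(1/\beta))$ I would bucket the light jobs by the ratio $|T\cap I(j)|/\ell_j\in[\beta,2)$, which has only $O(\log(1/\beta))$ dyadic values, refining if necessary so that within a bucket $|I(j)|$, $\ell_j$ and $|T\cap I(j)|$ are each pinned down up to a constant; the heaviest bucket still carries $\Omega(\mu|T|/\log(1/\beta))$ of the idle budget, and by a redundancy-removal step (using that the point–interval incidence matrix is an interval, hence totally unimodular, matrix) one may also arrange that this bucket covers every point of the relevant sub-union-of-intervals $\Theta(\mu/\log(1/\beta))$ times. Within a single scale all the lifespans meeting $T$ pairwise overlap a common portion (they all meet $T$, and the second condition together with $\alpha$-tightness bounds how long a lifespan can be relative to its overlap with $T$), so $\bigcup_j(I(j)\setminus T)$ sits in a region of length comparable to $|I(j)|+|T|$; $\alpha$-tightness then forces the jobs of that bucket to do $\Omega(\alpha)$ times (number of such jobs) times $|I(j)|$ units of work there, and dividing by the region length yields $m^*=\Omega(\mu/\log(1/\beta))$, as required.

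I expect the main obstacle to be precisely this last conversion: bounding the length of the region into which the light jobs' work is pushed, and ensuring that only $O(\log(1/\beta))$ scales — not, say, $O(\log\tfrac{|T|}{\beta})$ — can each carry a constant fraction of the idle budget. This is exactly where the second condition is indispensable (it forbids a job whose lifespan is enormous compared with its overlap with $T$) and where the bookkeeping against $\alpha$-tightness must be done with care; I would expect the clean way to organize it is to trim coverage scale by scale, keeping the second condition intact, and to fold any leftover loss into the constant hidden in the $\Omega(\cdot)$.
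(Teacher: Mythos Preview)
Your idle-time accounting in the first two paragraphs is correct and clean: the inequality $\sum_j x_j\ge(\mu-m^*)|T|$ and the heavy/light split both work exactly as you say. The trouble is the conversion in the third paragraph, and it is not merely a bookkeeping difficulty---it is a genuine gap.

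First, the claim that the second condition ``forbids a job whose lifespan is enormous compared with its overlap with $T$'' is false. The condition $|T\cap I(j)|\ge\beta\ell_j$ only bounds the \emph{laxity} $\ell_j$, and $\alpha$-tightness gives only a \emph{lower} bound $p_j>\alpha|I(j)|$ on the size. Nothing upper-bounds $p_j$, so $|I(j)|=p_j+\ell_j$ can be arbitrarily large relative to $|T\cap I(j)|$ (e.g.\ $p_j=10^6$, $\ell_j=1$, $|T\cap I(j)|=1$ is perfectly $\tfrac12$-tight and $(\cdot,1)$-critical). Hence $\bigcup_j(I(j)\setminus T)$ has no length bound of the form $O(|T|)$ or $O(|I(j)|+|T|)$, and your final ``divide work by region length'' step has no denominator to divide by. Second, the parenthetical ``refining if necessary so that within a bucket $|I(j)|$, $\ell_j$ and $|T\cap I(j)|$ are each pinned down up to a constant'' silently introduces bucketing by absolute scale, not just by the ratio $|T\cap I(j)|/\ell_j$; that costs unboundedly many buckets and destroys the $O(\log(1/\beta))$ count you need. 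Third, $T$ is a \emph{union} of intervals, so jobs in the same ratio-bucket can meet disjoint components of $T$ and need not ``pairwise overlap a common portion'' at all.

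The paper (following \cite{ChenMS16}) sidesteps all of this with a different mechanism: it extracts from $(G,T)$ a chain of $\lceil 2m^*/\alpha\rceil$ pairwise disjoint job sets $S_1,\ldots,S_{\lceil 2m^*/\alpha\rceil}$ with nested footprints $I(S_1)\subseteq\cdots\subseteq I(S_{\lceil 2m^*/\alpha\rceil})$, and shows $|I(S_{\lceil 2m^*/\alpha\rceil})|\ge\gamma\,|I(S_1)|$ for $\gamma\ge 2$; iterating this geometric growth against the second condition of criticality yields $m^*=\Omega(\mu/\log_\gamma(1/\beta))$. The point is that the volume argument is run layer by layer over a \emph{common} region $I(S_1)$, so one never needs to control the length of $\bigcup_j(I(j)\setminus T)$ or to assume that lifespans in a class share a common point. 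If you want to rescue your approach, you would need some analogous layering device that localizes the work to a region whose length you can actually bound; the bucketing by $|T\cap I(j)|/\ell_j$ alone does not supply it.
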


% The work \cite{ChenMS16} shows that if $A$ fails to schedule jobs on $m'$ machines, then there exists a $(\mu, 1 / \mu)$-critical pair where $\mu = m'+1$ machines. Hence if  $m' = \Theta(m^* \log m^*)$ machines are used by $A$, the algorithm is guaranteed to feasible schedule  all jobs since otherwise Theorem~\ref{thm:kevin-1} would imply a contradiction. 

	In the following, it is shown that Theorem \ref{thm:kevin-1} can be strengthened.  This was shown independently in \cite{AzarCohen2017}.

\begin{theorem}
	\label{thm:our-1}
	If all jobs are $\alpha$-tight, and there exists a $(\mu, \beta)$-critical pair, then $m^* = \Omega( \frac{\mu}{\log_{1/(1 - \alpha)} 1 / \beta})$.
\end{theorem}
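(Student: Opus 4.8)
The plan is to adapt the iterative lower-bound argument of \cite{ChenMS16} behind Theorem~\ref{thm:kevin-1}, changing only the step in which the tightness of the jobs is invoked. Recall the shape of that argument. One fixes an optimal schedule on $m^*$ machines and, starting from the given $(\mu,\beta)$-critical pair $(G_0,T_0)=(G,T)$, repeatedly passes to a ``denser'' critical pair $(G_{k+1},T_{k+1})$. Each step exploits the fact that at every instant of $T_k$ at most $m^*$ of the (at least $\mu_k$) jobs covering that instant can be running, so at least $\mu_k-m^*$ of them are forced to burn laxity there; using this one extracts a subcollection $G_{k+1}\subseteq G_k$ (automatically still $\alpha$-tight) and a union of intervals $T_{k+1}$ such that $(G_{k+1},T_{k+1})$ is $(\mu_{k+1},\beta_{k+1})$-critical with $\mu_{k+1}\ge\mu_k-O(m^*)$ and $\beta_{k+1}$ a multiplicative factor larger than $\beta_k$. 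The recursion is run until $\beta_k$ is large enough that $T_k$ contains all but a small fraction of each covering job's lifespan; at that point, since every job is $\alpha$-tight, each surviving job is forced to run inside $T_k$ for time $\Omega(p_j)$, and $p_j>\alpha|I(j)|$, so the total mandatory processing inside $T_k$ over the surviving jobs is $\Omega(\alpha\,\mu_k\,|T_k|)$, which is at most $m^*|T_k|$; hence $\mu_k=O(m^*)$ (treating $\alpha$ as bounded away from $0$, which is the regime of interest). Consequently $\mu=O(m^*r)$, where $r$ is the number of rounds performed.

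The single modification is the per-round multiplicative gain in $\beta$, together with (as a consequence) the value of $\beta$ at which the recursion terminates. In the setting of Theorem~\ref{thm:kevin-1}, jobs are only assumed $1/2$-tight, one round can guarantee only a constant-factor gain, and termination happens once $\beta_k=\Theta(1)$; these two facts combine to give $r=O(\log(1/\beta))$. Under the stronger hypothesis that every job is $\alpha$-tight, a job $j$ satisfies $\ell_j=|I(j)|-p_j<(1-\alpha)|I(j)|$, i.e.\ its whole laxity fits inside a $(1-\alpha)$-fraction of its lifespan. Charging the forced delay of each surviving job in the new congested region against this bound $\ell_j<(1-\alpha)|I(j)|$ improves the per-round gain to a factor $\Omega(1/(1-\alpha))$; correspondingly, the recursion must now push $\beta_k$ up to $\Theta(1/(1-\alpha))$ before $T_k$ swallows most of each covering job's (now proportionally much longer) lifespan. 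The two changes together give $r=O(\log_{1/(1-\alpha)}(1/\beta))$, hence $\mu=O(m^*r)=O(m^*\log_{1/(1-\alpha)}(1/\beta))$, which is the claimed bound $m^*=\Omega(\mu/\log_{1/(1-\alpha)}(1/\beta))$.

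I expect the main obstacle to be re-deriving the recursion step with this sharper, $\alpha$-dependent accounting: one must choose $T_{k+1}$ and $G_{k+1}$ so that each surviving job's overlap with the new congested region grows by a factor $\Omega(1/(1-\alpha))$ relative to $\ell_j$ — using the lower bound $|I(j)|>\ell_j/(1-\alpha)$ to ``spread out'' that job's forced delay over a long lifespan — while (i) keeping the per-round coverage loss to $O(m^*)$ by controlling the multiplicity of a minimal subcover of $T_{k+1}$, and (ii) verifying that the tightness and criticality hypotheses are genuinely restored so the step can be iterated. Everything else — the termination count and the arithmetic $\mu=O(m^*r)$ with $r=O(\log_{1/(1-\alpha)}(1/\beta))$ — is a routine adaptation of the proof of Theorem~\ref{thm:kevin-1}.
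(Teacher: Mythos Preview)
Your plan identifies the right leverage point --- the inequality $\ell_j<(1-\alpha)|I(j)|$ --- and would likely succeed if carried out, but the paper takes a different and more modular route that sidesteps exactly the ``main obstacle'' you flag.

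Rather than reopening the recursion of \cite{ChenMS16} and sharpening the per-round gain, the paper extracts from \cite{ChenMS16} a structural intermediate (Lemma~\ref{lem:partiallem}): the existence of pairwise disjoint layers $S_1,\dots,S_K$ of $\alpha$-tight jobs with $I(S_1)\subseteq\cdots\subseteq I(S_K)$, $K=\lceil 2m^*/\alpha\rceil$, together with the implication that any lower bound $\gamma$ on the total growth $|I(S_K)|/|I(S_1)|$ yields $m^*=\Omega(\mu/\log_\gamma(1/\beta))$. The new content is then a single, non-iterative volume lemma (Lemma~\ref{lem:lowergamma}): one shows directly that $|I(S_K)|\ge \tfrac{1}{32(1-\alpha)}|I(S_1)|$. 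The argument takes a minimal $2$-cover $S'_i$ of each layer, splits its jobs according to whether $|I(j)\cap I(S_1)|\ge 4(1-\alpha)|I(j)|$, and shows that either the ``small-overlap'' jobs already force the growth, or the ``large-overlap'' jobs in every one of the $K$ layers contribute mandatory work $\Omega(|I(S_1)|)$ inside $I(S_1)$, exceeding $m^*|I(S_1)|$.

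So the paper uses all $\Theta(m^*)$ layers at once in a terminal packing argument; you instead want a per-step multiplicative gain of $\Omega(1/(1-\alpha))$. Both rest on the same inequality, and your heuristic for the per-round gain (the new congested region swallows lifespans, and $|I(j)|>\ell_j/(1-\alpha)$) is the right intuition. The trade-off is that your route requires re-deriving and re-verifying the recursion step with $\alpha$-dependent constants, whereas the paper's route leaves the \cite{ChenMS16} machinery untouched and isolates the $\alpha$-dependence in one clean lemma. If you continue with your plan, be careful at the point where you claim the per-round $\beta$-gain: since $\beta$ is already normalized by $\ell_j$, the gain does not come from ``$\ell_j$ is small'' per se, but from the fact that the next $T_{k+1}$ contains entire lifespans of the surviving jobs, and those lifespans are long relative to $\ell_j$.
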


The rest of this section is devoted to proving Theorem \ref{thm:our-1}.  The proof builds on the analysis given in \cite{ChenMS16}.  The proof of Theorem~\ref{thm:kevin-1} in \cite{ChenMS16} establishes the following.

\begin{lemma}[\cite{ChenMS16}]
\label{lem:partiallem}
If there exists a pair $(G,T)$ that is $(\mu, \beta)$-critical then there exists a collection $S_1, S_2, \ldots S_{\lceil 2m^*/\alpha \rceil}$ of pairwise disjoint sets of $\alpha$-tight jobs  where $I(S_1) \subseteq I(S_2) \subseteq  \ldots \subseteq  I(S_{\lceil 2m^*/\alpha \rceil}) $.  Further if $|I(S_{\lceil 2m^*/\alpha \rceil})| \geq \gamma |I(S_1)|$ then  $m^* \geq \Omega(\frac{\mu}{\log_{\gamma} \frac{1}{\beta}})$ for any scalar $\gamma$.
\end{lemma}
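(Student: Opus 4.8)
\textbf{Proof proposal for Lemma~\ref{lem:partiallem}.}

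The plan is to recover the iterated chain of job sets that is implicit in the proof of Theorem~\ref{thm:kevin-1} from \cite{ChenMS16}, and then read off the quantitative consequence. Starting from a $(\mu,\beta)$-critical pair $(G,T)$, I would run the same ``peeling'' argument that Chen \etal use: pick a suitable sub-union of intervals, extract from $G$ a set $S_1$ of $\alpha$-tight jobs that simultaneously cover the relevant portion of $T$ $\mu$ times, and then inflate the covered region by appealing to $\alpha$-tightness---each $\alpha$-tight job $j$ with $|T\cap I(j)|\ge\beta\ell_j$ has lifespan length at least $\frac{1}{1-\alpha}$ times its laxity, so the union of the lifespans of the jobs in $S_1$ strictly contains (a constant fraction larger than) the part of $T$ they were charged to. Iterating this inflation produces a nested family $I(S_1)\subseteq I(S_2)\subseteq\cdots$, where at each step the new set $S_{k+1}$ is disjoint from the previous ones (this disjointness is what forces us to lose a factor in $m^*$: each round consumes fresh jobs, and $G$ is finite only because $m^*$ is finite), the covering multiplicity stays at least $\mu$ on the relevant sub-region, and the total interval length grows geometrically. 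The number of rounds we can afford before exhausting the $\alpha$-tight jobs is governed by $m^*$; carrying this out carefully yields exactly $\lceil 2m^*/\alpha\rceil$ sets, which is the count stated in the lemma.

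Concretely, I would organize the argument in the following steps. First, set up the invariant maintained across rounds: after round $k$ we have pairwise disjoint $\alpha$-tight sets $S_1,\dots,S_k$ with $I(S_1)\subseteq\cdots\subseteq I(S_k)$, and a ``live'' sub-union $T_k\subseteq T\cap I(S_k)$ every point of which is covered $\mu$ times by jobs not yet used, with $|T_k\cap I(j)|\ge\beta\ell_j$ for the jobs still available. Second, the induction step: from $T_k$ and the available $\mu$-fold cover, select the next disjoint set $S_{k+1}$ greedily (e.g. repeatedly take an available job of maximal lifespan extent and remove it together with everything it covers, $\mu$ times over), argue its jobs are $\alpha$-tight, and use $\alpha$-tightness to show $|I(S_{k+1})|$ exceeds a fixed constant multiple of $|T_k|$, hence of $|I(S_k)|$ up to constants; simultaneously define $T_{k+1}$. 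Third, count: because the $S_k$'s are pairwise disjoint and every point of $T$ lies in $\mu$ lifespans while a feasible schedule on $m^*$ machines can only ``use up'' so much total lifespan mass, the process must terminate within $O(m^*/\alpha)$ rounds---this is the place where the specific $\lceil 2m^*/\alpha\rceil$ constant is pinned down by Chen \etal's accounting, and I would cite/adapt that accounting verbatim rather than rederive the constant. Fourth, conclude the quantitative statement: if the endpoint set satisfies $|I(S_{\lceil 2m^*/\alpha\rceil})|\ge\gamma|I(S_1)|$, then since each round multiplies the length by at most some fixed factor depending on $\beta$ (namely the per-round blow-up is at most $\Theta(1/\beta)$), having $\lceil 2m^*/\alpha\rceil$ rounds produce a $\gamma$-fold increase forces $\log_\gamma(1/\beta)\gtrsim m^*/\mu$ after rearranging, i.e. $m^*=\Omega(\mu/\log_\gamma(1/\beta))$; I would also observe that the number of rounds needed to blow up by $\gamma$ is at least $\log(\gamma)/\log(1/\beta)$ and compare it against the round count, which is the cleanest way to extract the stated bound.

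The main obstacle I anticipate is faithfully reconstructing the internal mechanics of the Chen \etal proof of Theorem~\ref{thm:kevin-1}: the lemma is stated as ``the proof of Theorem~\ref{thm:kevin-1} establishes the following,'' so the real work is isolating precisely which intermediate object their argument builds, in what order the intervals are inflated, and how the $\mu$-fold covering is preserved when passing from $S_k$ to $S_{k+1}$ without double-counting jobs. A secondary subtlety is the bookkeeping that guarantees exactly $\lceil 2m^*/\alpha\rceil$ disjoint sets---the factor $2$ and the division by $\alpha$ come from a volume/mass comparison against an optimal schedule, and I would need to state that comparison cleanly (each $\alpha$-tight job contributes processing volume at least $(1-\alpha)|I(j)|$, or dually its lifespan is at most a bounded multiple of what it can be charged, so the $m^*$ machines over the relevant window bound the number of disjoint covering layers we can extract). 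Everything downstream of having the nested chain---the geometric-growth-versus-round-count comparison that yields $m^*\ge\Omega(\mu/\log_\gamma(1/\beta))$---is routine algebra and is not where the difficulty lies.
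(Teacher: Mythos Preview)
The paper does not actually prove this lemma: it is stated as a fact extracted from the proof of Theorem~\ref{thm:kevin-1} in \cite{ChenMS16} and simply cited. So there is no ``paper's own proof'' to compare against beyond the reference. Your proposal is therefore an attempt to reconstruct the Chen--Megow--Schewior argument, and while the general shape (iterated peeling of disjoint covering layers, nested unions of lifespans, geometric growth) is in the right spirit, there are two concrete gaps.

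First, you have the role of the count $K=\lceil 2m^*/\alpha\rceil$ backwards. You argue that the peeling process ``must terminate within $O(m^*/\alpha)$ rounds'' by a volume/mass comparison. But the lemma asserts that $K$ pairwise disjoint nested sets \emph{exist}, i.e.\ the construction \emph{reaches} at least $K$ layers; it is not an upper bound on how long the peeling can run. In the \cite{ChenMS16} argument the $\mu$-fold covering of $T$ is what guarantees enough disjoint layers can be extracted, and $K$ is a specific target chosen because it is exactly the number that makes the downstream volume contradiction (your paper's Lemma~\ref{lem:lowergamma}) fire. If your peeling terminated after $O(m^*/\alpha)$ rounds by exhausting jobs, you would already have a contradiction without ever using $\beta$ or $\gamma$.

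Second, and more seriously, your quantitative step does not produce the stated bound. You write: per-round blow-up is at most $\Theta(1/\beta)$, $K$ rounds yield total blow-up $\ge\gamma$, hence ``$\log_\gamma(1/\beta)\gtrsim m^*/\mu$ after rearranging.'' But carry out the algebra: $(C/\beta)^K\ge\gamma$ gives $K\gtrsim \log\gamma/\log(1/\beta)$, and since $K\approx m^*/\alpha$ this yields $m^*\gtrsim \log\gamma/\log(1/\beta)$. The parameter $\mu$ never enters; you cannot get $m^*=\Omega(\mu/\log_\gamma(1/\beta))$ from this comparison. The place $\mu$ actually enters in \cite{ChenMS16} is different: the $\mu$-covering supplies roughly $\mu$ disjoint layers in total, and these are grouped into blocks of size $K$; each block (under the hypothesis $|I(S_K)|\ge\gamma|I(S_1)|$) shrinks the live region by a factor $\gamma$, while the $\beta$-condition $|T\cap I(j)|\ge\beta\ell_j$ caps the cumulative shrinkage at $O(1/\beta)$. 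Hence the number of blocks is at most $O(\log_\gamma(1/\beta))$, so $\mu\lesssim K\cdot\log_\gamma(1/\beta)\approx (m^*/\alpha)\log_\gamma(1/\beta)$, which rearranges to $m^*=\Omega(\mu/\log_\gamma(1/\beta))$. Your sketch collapses this two-level structure (long $\mu$-chain partitioned into $K$-blocks) into a single $K$-step process, and that is why $\mu$ disappears.
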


After proving this lemma, the proof in \cite{ChenMS16} is completed by showing $\gamma \geq 2$.

%   a collection $S_1, S_2, \ldots S_{\lceil 2m^*/\alpha \rceil}$ of pairwise disjoint sets of $\alpha$-tight jobs such that $I(S_1) \subseteq I(S_2) \subseteq  \ldots \subseteq  I(S_{\lceil 2m^*/\alpha \rceil}) $.  Further it shows that if $|I(S_{\lceil 2m^*/\alpha \rceil})| \geq \gamma |I(S_1)|$ then  $m^* \geq \Omega(\frac{\mu}{\log_{\gamma} \frac{1}{\beta}})$. 
%\begin{lemma}
%\label{lem:kevingeneral}
%If there exists a $(\mu,\beta)$-critical pair, then $m^* \geq \Omega(\frac{\mu}{\log_{\gamma} \frac{1}{\beta}})$.
%\end{lemma}

Given the previous lemma, to prove Theorem \ref{thm:our-1}, it is sufficient to establish a stronger lower bound on $\gamma$, namely that $\gamma \geq \frac{1}{32(1-\alpha)}$.  This is done in Lemma \ref{lem:lowergamma}.  This  and  Lemma~\ref{lem:partiallem} gives a contradiction to the definition of $m^*$.

 %We establish a stronger bound on $\gamma$ in the following lemma establishing that  $\gamma \geq \frac{1}{32(1-\alpha)}$. Hence we will have $\gamma \geq \max\{\frac{1}{32(1-\alpha)}, 2\}$, which and the above lemma proves Theorem~\ref{thm:our-1}.

\begin{lemma}
\label{lem:lowergamma}
 Let $S_1, S_2, \ldots S_{\lceil 2m^*/\alpha \rceil}$ be pairwise disjoint sets of $\alpha$-tight jobs such that $I(S_1) \subseteq I(S_2) \subseteq  \ldots \subseteq  I(S_{\lceil 2m^*/\alpha \rceil}) $.  It is the case that $32(1-\alpha)|I(S_{\lceil 2m^*/\alpha \rceil})| \geq |I(S_1)|$.
\end{lemma}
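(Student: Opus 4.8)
I need to prove Lemma~\ref{lem:lowergamma}: given pairwise disjoint sets $S_1 \subseteq S_2 \subseteq \cdots$ (nested by union of lifespans) of $\alpha$-tight jobs, with $N := \lceil 2m^*/\alpha \rceil$ chain links, we must have $32(1-\alpha)\,|I(S_N)| \geq |I(S_1)|$. Equivalently, if $|I(S_N)| < \frac{1}{32(1-\alpha)} |I(S_1)|$ then we get a contradiction — but actually the cleanest reading is purely combinatorial: the chain of nested interval-unions cannot ``shrink'' too fast, and since there are $N = \Theta(m^*/\alpha)$ of them, if the ratio $|I(S_1)|/|I(S_N)|$ were large it would force more than $m^*$ machines. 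So the plan is to track how much $|I(S_k)|$ can grow as $k$ decreases from $N$ to $1$, using that each $S_k$ consists of $\alpha$-tight jobs and that the sets are disjoint but their lifespan-unions are nested.

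**Key steps.**

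First I would recall why $\alpha$-tightness matters: a job $j$ with $p_j > \alpha|I(j)|$ has laxity $\ell_j = |I(j)| - p_j < (1-\alpha)|I(j)|$, so its lifespan length is at most $\frac{1}{1-\alpha}\ell_j$; more usefully, total ``work'' $\sum_{j\in S_k} p_j$ is at least $\alpha \sum_{j \in S_k}|I(j)| \geq \alpha |I(S_k)|$, since lifespans of jobs in one disjoint set can only overlap in ways that still make the sum of lengths at least the length of the union. Second, I would argue that in any feasible schedule on $m^*$ machines, the work done during the time region $I(S_k)$ is at most $m^* |I(S_k)|$, but since the $S_k$ are disjoint with $I(S_1)\subseteq\cdots\subseteq I(S_N)$, the jobs in $S_1,\dots,S_k$ all have lifespans inside $I(S_k)$, so their total processing must fit inside $m^* |I(S_k)|$ time-machine units: $\sum_{\ell\le k}\sum_{j\in S_\ell} p_j \le m^* |I(S_k)|$. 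Combining with $\sum_{j\in S_\ell}p_j \ge \alpha|I(S_\ell)| \ge \alpha|I(S_1)|$ for every $\ell$, we get $k\,\alpha\,|I(S_1)| \le m^*|I(S_k)|$, hence $|I(S_k)| \ge \frac{k\alpha}{m^*}|I(S_1)|$. Setting $k = N = \lceil 2m^*/\alpha\rceil \ge 2m^*/\alpha$ gives $|I(S_N)| \ge 2|I(S_1)|$, which is even stronger than the claimed bound whenever $\alpha \le 1 - \tfrac{1}{64}$, and in any case $32(1-\alpha) \le 32$ so a bound like $|I(S_N)|\ge 2|I(S_1)|$ certainly implies $32(1-\alpha)|I(S_N)|\ge|I(S_1)|$ unless $\alpha$ is extremely close to $1$.

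**The main obstacle.**

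The genuinely delicate point is the regime where $\alpha$ is very close to $1$, i.e.\ jobs are extremely tight. There $32(1-\alpha)$ is tiny, so I need $|I(S_N)|/|I(S_1)|$ to be correspondingly large, on the order of $\frac{1}{1-\alpha}$ rather than just a constant. The argument above only gives a constant ($\ge 2$), so it is not enough by itself. To get the $\frac{1}{1-\alpha}$ factor I expect to need a geometric/multiplicative growth argument across the chain rather than an additive one: show that passing from $S_{k+1}$ down to $S_k$ (or in blocks) multiplies the union length by a factor bounded away from $1$, because the jobs in $S_k$ must ``spread out'' relative to their laxity $\ell_j \le (1-\alpha)|I(j)|$, and then iterate over $\Theta(m^*)$ steps to accumulate a factor like $(1+c)^{\Theta(1/\alpha)}$ or to invoke the nested structure more carefully. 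This is exactly the part of the proof in \cite{ChenMS16} that established $\gamma \ge 2$, now pushed to $\gamma \ge \frac{1}{32(1-\alpha)}$; so the plan is to revisit their argument, identify where the factor $2$ came from a crude estimate of the form ``each level adds at least one job's worth of length,'' and replace it with the sharper ``each level adds at least an $\Omega(\alpha)$ fraction more length, compounding multiplicatively.'' I would then choose the block size and constants so the product telescopes to the stated $\frac{1}{32(1-\alpha)}$, absorbing the ceiling and small-$m^*$ corner cases into the constant $32$.
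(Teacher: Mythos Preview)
Your volume argument is correct and cleanly recovers the bound $|I(S_N)|\ge 2|I(S_1)|$, which is exactly the $\gamma\ge 2$ of \cite{ChenMS16}. But that is the easy regime: it proves the lemma only when $32(1-\alpha)\ge 1/2$, i.e.\ $\alpha\le 63/64$. For $\alpha$ close to $1$ you need $|I(S_N)|\ge \tfrac{1}{32(1-\alpha)}|I(S_1)|$, and here your proposal has a genuine gap. The ``multiplicative compounding'' idea does not go through as stated: the nesting $I(S_1)\subseteq I(S_2)\subseteq\cdots$ permits $|I(S_k)|=|I(S_{k-1})|$ for many consecutive $k$, and your own volume inequality $k\alpha|I(S_1)|\le m^*|I(S_k)|$ only forces growth on average, not geometrically. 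Iterating anything per level will again collapse to a constant factor, not a $\tfrac{1}{1-\alpha}$ factor.

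The missing idea is a dichotomy tied to how jobs in the higher levels overlap $I(S_1)$, with threshold proportional to $(1-\alpha)$. Concretely (this is what the paper does): first thin each $S_i$ to $S'_i\subseteq S_i$ with $I(S'_i)=I(S_i)$ and every time in $I(S'_i)$ covered by at most two jobs of $S'_i$. Then split $S'_i$ into $J_{i,1}=\{j:|I(j)\cap I(S_1)|\ge 4(1-\alpha)|I(j)|\}$ and $J_{i,2}$ the rest. If some $i$ has $|I(J_{i,2})\cap I(S_1)|\ge\tfrac14|I(S_1)|$, then because each $j\in J_{i,2}$ satisfies $|I(j)|>\tfrac{1}{4(1-\alpha)}|I(j)\cap I(S_1)|$ and the $I(j)$ overlap at most two-deep, one gets $|I(S_i)|\ge |I(J_{i,2})|\ge \tfrac{1}{32(1-\alpha)}|I(S_1)|$, which already yields the claim. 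Otherwise $|I(J_{i,1})\cap I(S_1)|\ge\tfrac34|I(S_1)|$ for every $i$; each $j\in J_{i,1}$ must be processed at least $p_j-(|I(j)|-|I(j)\cap I(S_1)|)\ge \tfrac34|I(j)\cap I(S_1)|$ units inside $I(S_1)$, so each level forces at least $\tfrac{9}{16}|I(S_1)|$ units of work into $I(S_1)$, and summing over $N=\lceil 2m^*/\alpha\rceil$ disjoint levels exceeds $m^*|I(S_1)|$, contradicting feasibility on $m^*$ machines. The $(1-\alpha)$ in the conclusion comes precisely from the overlap threshold defining $J_{i,1}$ versus $J_{i,2}$; this is the step your proposal does not supply.
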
  

\begin{proof}
Suppose that the lemma is not true and $32(1-\alpha)|I(S_{\lceil 2m^*/\alpha \rceil})| < |I(S_1)|$.  To begin, we construct sets $S'_i \subseteq S_i$ such that $I(S'_i) = I(S_i)$ and for each $t \in I(S'_i)$ there are at most two jobs $j \in S'_i$ where $t \in I(j)$.   The construction of such a set is standard in the scheduling community.  A full proof can be found in \cite{ChenMS16}. Such a set can be constructed using a simple greedy procedure where jobs are chosen greedily  such that you always choose to add the job to $S'_i$ from $S_i$ with the latest deadline that covers the smallest uncovered time $t \in I(S_i)$.  

Fix a set $S'_i$.  Partition the jobs in $S'_i$ into two sets $J_{i,1}$ and $J_{i,2}$.  Let $J_{i,1}$ contain a job $j\in S'_i$ if $|I(j) \cap I(S_1)| \geq 4(1-\alpha)|I(j)|$ and otherwise job $j$ is in $J_{i,2}$.   

First say that there exists an $i$ such that $|I(J_{i,2}) \cap I(S_1)| \geq \frac{1}{4}|I(S_1)|$.  Then we have the following.
\begin{eqnarray*}
&&|I(S_{\lceil 2m^*/\alpha \rceil})|\\
 &\geq & |I(S_i)| \quad \quad \quad \quad \mbox{[Since $I(S_i) \subseteq S_{\lceil 2m^*/\alpha \rceil}$]} \\
&\geq&  |I(S'_i)| \quad \quad \quad \quad \mbox{[By definition of $I(S'_i)$]}\\
&\geq&  |I(J_{i,2})| \quad \quad \quad \quad \mbox{[Since $J_{i,2} \subseteq S'_i$]} \\
&\geq&  \frac{1}{2}\sum_{j \in J_{i,2}} |I(j)| \;\;\;\;\quad  \mbox{[Definition of $S'_{i}$]} \\
&\geq&  \frac{1}{ 8(1-\alpha)}\sum_{j \in J_{i,2}} |I(j) \cap I(S_1)| \;\;\;\; \mbox{[Definition of $J_{i,2}$]} \\
&\geq&  \frac{1}{ 32(1-\alpha)}|I(S_1)| \;\;\;\; \\
&& \quad \quad \quad \quad \mbox{[since $|I(J_{i,2}) \cap I(S_1)| \geq \frac{1}{4}|I(S_1)|$]} 
\end{eqnarray*}

We note that the fourth inequality uses the fact that no time is covered by more than two jobs in $S'_i$.  

This contradicts the assumption that $32(1-\alpha)|I(S_{\lceil 2m^*/\alpha \rceil})| < |I(S_1)|$.  Thus, we may assume that there is no $i$ where $|I(J_{i,2}) \cap I(S_1)| \geq \frac{1}{4}|I(S_1)|$.  In particular, since $I(S_i) \subseteq  I(S_i) = I(S'_i) = I(J_{i,1} \cup J_{i,2})$, it is the case that $|I(J_{i,1}) \cap I(S_1)| \geq \frac{3}{4}|I(S_1)|$.

 We will draw a contradiction by showing that the amount of work that must be done during $I(S_1)$ is greater than any feasible schedule can complete using $m^*$ machines. 

Consider any job $j \in \cup_i J_{i,1}$.  Consider the amount of work of job $j$ that must be done during $I(S_1)$ by any feasible schedule.  This is at least $q_j := p_j - (|I(j)|- |I(S_1) \cap I(j)|)$.  Knowing that $j$ is $\alpha$-tight, we have that $q_j \geq \alpha|I(j)| - (|I(j)|- |I(S_1) \cap I(j)|)= |I(S_1) \cap I(j)| - (1-\alpha )|I(j)| $.  

Let $\lambda$ be such that  $\lambda|I(j)| = |I(S_1) \cap I(j)|$.  Then, we have 
$|I(S_1) \cap I(j)| - (1-\alpha )|I(j)| = (1-\frac{1-\alpha}{\lambda})|I(S_1) \cap I(j)|$.
%We have that $|I(S_1) \cap I(j)| - (1-\alpha )|I(j)|  = \lambda |I(j)|- (1-\alpha )|I(j)| = (\lambda - 1 + \alpha)|I(j)| \geq \frac{1}{\lambda}(\lambda - 1 + \alpha)|I(S_1) \cap I(j)| = (1-\frac{1-\alpha}{\lambda})|I(S_1) \cap I(j)|$.  
By definition of $J_{i,1}$ it is the case that $|I(j) \cap I(S_1)| \geq 4(1-\alpha)|I(j)|$ and so $\lambda \geq 4(1-\alpha)$.  Therefore we have that $q_j \geq (1-\frac{1-\alpha}{\lambda})|I(S_1) \cap I(j)| \geq \frac{3}{4}|I(S_1) \cap I(j)|  $.

The argument above gives that each job $j \in \cup_i J_{i,1} $ must be processed for $ \frac{3}{4}|I(S_1) \cap I(j)| $ time units during $I(S_1)$.  The total amount of work  that must be done during $I(S_1)$ for jobs in $J_{i,1}$ is at least the following for any $i$. 

\begin{eqnarray*}
&&\sum_{j \in J_{i,1}}  \frac{3}{4}|I(S_1) \cap I(j)| \\
&\geq& \frac{3}{4} |I(J_{i,1}) \cap I(S_1)| \\
&\geq& \frac{9}{16} | I(S_1)| \;\;\;\; \mbox{[since $|I(J_{i,1}) \cap I(S_1)| \geq \frac{3}{4}|I(S_1)|$]} \ .
\end{eqnarray*}

  There are $\lceil 2m^*/\alpha \rceil$ sets $S_i$ and unique jobs in each set.  Thus, the total volume that must be processed during $I(S_1)$ is greater than  $\lceil 2m^*/\alpha \rceil  \frac{9}{16} | I(S_1)| > m^* |I(S_1)| $.  This is more work than any algorithm with $m^*$ machines can do during $I(S_1)$, contradicting the definition of $m^*$.   
\end{proof}

	%Finally, we briefly explain why the previous work indeed implies the stronger statement, Theorem~\ref{thm:our-1}.

\subsection{The Second Lower Bound}
	\label{sec:lb2}
	
  The authors in \cite{ChenMS16} give another lower bound based on a variant of the definition of a critical pair.
  
  \begin{definition}[\cite{ChenMS16}]
  	\label{def:kevin2}
  	Let $G$ be a set of $\alpha$-tight jobs and let $T$ be a non-empty finite union of time intervals. For some $\mu \in \mathbb{N}$ and $\beta \in (0, 1)$, a pair $(G, T)$ is called \textbf{weakly} $(\mu, \beta)$-critical if
\begin{enumerate}
	\item each time $t$ belonging to an interval in $T$ is covered by at least $\mu$ distinct jobs in $G$.
	\item $|T| \geq \beta / \mu \cdot \sum_{j \in G} \ell_j$.
\end{enumerate}	
  \end{definition}
  
  \begin{theorem}[\cite{ChenMS16}]
	\label{thm:kevin-2}
	If there exists a weakly $(\mu, \beta)$-critical pair, then $m^* = \Omega( \frac{\mu}{\log 1 / \beta})$.
\end{theorem}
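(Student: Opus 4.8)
The plan is to deduce Theorem~\ref{thm:kevin-2} from the ordinary critical‑pair bound, Theorem~\ref{thm:kevin-1} (or, for a sharper conclusion, Theorem~\ref{thm:our-1}), by showing that a weakly $(\mu,\beta)$‑critical pair can be ``regularized'' into an ordinary $(\Theta(\mu),\beta^{\Theta(1)})$‑critical pair. This loses only a constant factor overall, since the bound in Theorem~\ref{thm:kevin-1} depends on $\beta$ only through $\log(1/\beta)$ and $\log(1/\beta^{c}) = c\log(1/\beta)$.

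The regularization proceeds in three moves, each controlled by the aggregate laxity bound $\sum_{j\in G}\ell_j \le \tfrac{\mu}{\beta}|T|$, which is exactly the content of condition~2 of weak criticality. (i) \emph{Bound the coverage from above:} replace $G$ by a subset $G_0$ that is still a $\mu$‑fold cover of $T$ but minimal as such; the standard exchange argument for intervals (the same trick used to build the sets $S'_i$ in the proof of Lemma~\ref{lem:lowergamma}) shows that a minimal $\mu$‑fold interval cover has multiplicity at most $2\mu$ everywhere, so $\mu|T| \le \int_T \mathrm{cov}_{G_0} \le 2\mu|T|$. (ii) \emph{Discard jobs with tiny overlap with $T$:} remove every $j\in G_0$ with $|T\cap I(j)| < \beta^2\ell_j$; the removed jobs carry total $T$‑overlap at most $\beta^2\sum_j\ell_j \le \beta\mu|T|$, a $O(\beta)$ fraction of the $\ge\mu|T|$ total, so the survivors $G_1$ still satisfy $\int_T\mathrm{cov}_{G_1} \ge (1-O(\beta))\mu|T|$ while each $j\in G_1$ has $|T\cap I(j)| \ge \beta^2\ell_j$. (iii) \emph{Restore per‑point coverage:} pass from $T$ to the sub‑union $T' := \{t\in T : \mathrm{cov}_{G_1}(t) \ge \mu/2\}$; since on $T\setminus T'$ at least $\mu/2$ of the covering jobs were discarded, and the discarded coverage integrates to only $O(\beta)\mu|T|$, a Markov estimate gives $|T\setminus T'| = O(\beta)|T|$, so $|T'| \ge (1-O(\beta))|T|$. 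Now $(G_1, T')$ has every point of $T'$ covered $\Theta(\mu)$ times, i.e.\ property~1 with parameter $\Theta(\mu)$, and---once property~2 is secured, see below---property~2 with parameter $\beta^{\Theta(1)}$; feeding $(\Theta(\mu), \beta^{\Theta(1)})$ into Theorem~\ref{thm:kevin-1} yields $m^* = \Omega(\mu/\log(1/\beta))$.

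The delicate point---the step I expect to need the most care---is exactly property~2 of the ordinary critical pair for $(G_1,T')$: a surviving job $j\in G_1$ has good overlap with $T$, but shrinking $T$ to $T'$ can erode $|T'\cap I(j)|$, and the crude estimate $|T'\cap I(j)| \ge |T\cap I(j)| - |T\setminus T'|$ is useless for jobs whose lifespan barely meets $T$ yet whose laxity is large---precisely the jobs that, in a feasible schedule, can afford to burn their laxity outside $T$. Closing this requires either choosing the thresholds in moves (ii)--(iii) in one shot so that a single round of discarding plus one round of shrinking already closes the arithmetic without circularity, or running a short peeling iteration---alternately discarding small‑overlap jobs and shrinking $T$ to its well‑covered portion---whose termination with a nonempty pair of the required quality is certified by a potential such as $|T|$ together with the residual total laxity $\sum_{j}\ell_j$. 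If the reduction resists, the fallback is to mirror ChenMS's proof of Theorem~\ref{thm:kevin-1} directly: rebuild the nested chain $I(S_1)\subseteq\cdots\subseteq I(S_{\lceil 2m^*/\alpha\rceil})$ of Lemma~\ref{lem:partiallem}, substituting the aggregate bound $\sum_j\ell_j \le \tfrac{\mu}{\beta}|T|$ for the per‑job bound $|T\cap I(j)|\ge\beta\ell_j$ wherever the latter is invoked in the chain construction, and re‑derive the volume‑ratio bound $\gamma=\Omega(1)$ under the weak hypothesis.
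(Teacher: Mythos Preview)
The paper does not prove Theorem~\ref{thm:kevin-2} itself; it is quoted from \cite{ChenMS16}, and the paper records the structure of that proof as Lemma~\ref{lem:partiallem}: from a weakly $(\mu,\beta)$-critical pair one directly builds a nested chain $I(S_1)\subseteq\cdots\subseteq I(S_{\lceil 2m^*/\alpha\rceil})$ of disjoint $\alpha$-tight job sets and derives $m^*=\Omega(\mu/\log_\gamma(1/\beta))$ from a growth ratio $\gamma\ge 2$. Your stated fallback---rebuild the nested chain and substitute the aggregate laxity bound $\sum_j\ell_j\le \tfrac{\mu}{\beta}|T|$ for the per-job bound wherever it is used---is exactly this route.

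Your primary route, reducing a weakly critical pair to an ordinary one and then invoking Theorem~\ref{thm:kevin-1}, is genuinely different and would be pleasantly modular if it worked. But the gap you flag is real and, as far as I can see, not closed by the peeling sketch. The obstruction is a mismatch of scales: move~(iii) controls $|T\setminus T'|$ only globally (it is $O(\beta)|T|$), whereas property~2 of an ordinary critical pair is a \emph{per-job} statement $|T'\cap I(j)|\ge \beta'\ell_j$. A surviving job $j\in G_1$ may have $|T\cap I(j)|$ as small as $\beta^2\ell_j$, and nothing prevents that entire overlap from sitting inside $T\setminus T'$; one round of (ii)--(iii) can destroy property~2 for any fixed finite set of jobs while shrinking $|T|$ by only an $O(\beta)$ fraction. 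Iterating does not obviously help either: the only invariant you control across rounds is the aggregate $\sum_j\ell_j\le \mu|T_0|/\beta$, and each pass of discard-then-shrink costs a constant factor in the coverage parameter $\mu$ without any mechanism forcing the per-job condition to stabilize in $O(1)$ rounds. Absent a new structural idea to localize the global loss $|T\setminus T'|$ to the lifespans of the offending jobs, the reduction does not go through, and the nested-chain argument (your fallback, and the paper's actual approach via \cite{ChenMS16}) is the correct path.
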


This theorem can be strengthened as was done for the first lower bound.  This was shown independently in~\cite{AzarCohen2017}.

\begin{theorem}
	\label{thm:our-2}
	If there exists a weakly $(\mu, \beta)$-critical pair, then $m^* = \Omega( \frac{\mu}{\log_{1/(1 - \alpha)} 1 / \beta})$.
\end{theorem}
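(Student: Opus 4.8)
The plan is to reproduce the proof of Theorem~\ref{thm:our-1} almost verbatim, writing ``weakly $(\mu,\beta)$-critical'' in place of ``$(\mu,\beta)$-critical'' throughout. Recall the structure of that proof: \cite{ChenMS16}'s proof of Theorem~\ref{thm:kevin-1} is distilled into Lemma~\ref{lem:partiallem}, which extracts from a critical pair a chain $S_1,\dots,S_{\lceil 2m^*/\alpha\rceil}$ of pairwise disjoint sets of $\alpha$-tight jobs with $I(S_1)\subseteq\dots\subseteq I(S_{\lceil 2m^*/\alpha\rceil})$ and asserts that $|I(S_{\lceil 2m^*/\alpha\rceil})|\ge\gamma|I(S_1)|$ forces $m^*=\Omega(\mu/\log_\gamma(1/\beta))$; the original argument then supplies $\gamma\ge 2$, and the strengthening to Theorem~\ref{thm:our-1} comes from the improved bound $\gamma\ge\frac{1}{32(1-\alpha)}$ of Lemma~\ref{lem:lowergamma}. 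I would run this same route for the weakly-critical case.

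First I would observe that \cite{ChenMS16}'s proof of Theorem~\ref{thm:kevin-2} furnishes the exact analogue of Lemma~\ref{lem:partiallem}: from a weakly $(\mu,\beta)$-critical pair one again obtains a chain $S_1,\dots,S_{\lceil 2m^*/\alpha\rceil}$ of pairwise disjoint $\alpha$-tight job sets with nested lifespans for which $|I(S_{\lceil 2m^*/\alpha\rceil})|\ge\gamma|I(S_1)|$ implies $m^*=\Omega(\mu/\log_\gamma(1/\beta))$. This should not require genuinely new work: the chain-building machinery is common to both lower-bound proofs of \cite{ChenMS16}, and the only place the two notions of criticality differ --- condition~2 of the critical versus the weakly-critical definition --- enters only in how a length is charged against $\sum_{j\in G}\ell_j$, which is decoupled from both the chain construction and the final logarithmic accounting.

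Second, I would reuse the two already-established ingredients. Lemma~\ref{lem:lowergamma} is phrased purely in terms of a nested chain of pairwise disjoint $\alpha$-tight job sets and is oblivious to whether a critical or a weakly-critical pair produced it, so it applies unchanged and yields $\gamma\ge\frac{1}{32(1-\alpha)}$; together with the $\gamma\ge 2$ bound from the original proof we may take $\gamma=\max\{2,\frac{1}{32(1-\alpha)}\}$. It then only remains to rewrite $m^*=\Omega(\mu/\log_\gamma(1/\beta))$ in the claimed base: since $\log_\gamma(1/\beta)=\log(1/\beta)/\log\gamma$, it suffices to verify $\log\gamma=\Omega(\log\frac{1}{1-\alpha})$, which follows from a two-line case split --- when $\frac{1}{1-\alpha}\le 1024$ it is immediate from $\gamma\ge 2$ (as then $\log\frac{1}{1-\alpha}\le 10\le 10\log\gamma$), and when $\frac{1}{1-\alpha}>1024$ one has $\frac{1}{32(1-\alpha)}\ge\sqrt{\frac{1}{1-\alpha}}$ and hence $\log\gamma\ge\frac{1}{2}\log\frac{1}{1-\alpha}$. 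Thus $\log_\gamma(1/\beta)=O(\log_{1/(1-\alpha)}(1/\beta))$ and $m^*=\Omega(\mu/\log_{1/(1-\alpha)}(1/\beta))$.

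I expect the main obstacle to be precisely the first step: confirming that \cite{ChenMS16}'s proof of Theorem~\ref{thm:kevin-2} really factors through a statement of exactly the form of Lemma~\ref{lem:partiallem} --- same chain length $\lceil 2m^*/\alpha\rceil$, same conclusion stated in terms of the blowup ratio $\gamma$ --- so that the better $\gamma$ from Lemma~\ref{lem:lowergamma} can simply be dropped in. If the weakly-critical argument there is packaged somewhat differently, one would have to re-derive the intermediate chain lemma, but the adaptation should be routine since the chain construction is the same in both cases; everything after that is elementary logarithm arithmetic.
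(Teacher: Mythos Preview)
Your proposal is correct and follows exactly the paper's approach: the paper states the weakly-critical analogue of Lemma~\ref{lem:partiallem} (extracted from \cite{ChenMS16}'s proof of Theorem~\ref{thm:kevin-2}) and then simply combines it with Lemma~\ref{lem:lowergamma}. Your additional case split verifying $\log\gamma=\Omega(\log\tfrac{1}{1-\alpha})$ is a detail the paper leaves implicit, but the overall structure is identical.
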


	The proof of Theorem \ref{thm:our-2}  extends the proof of Theorem \ref{thm:kevin-2} exactly as the proof of Theorem \ref{thm:our-1} extends the proof of Theorem \ref{thm:kevin-1}.  
	
	The proof in \cite{ChenMS16} shows the following lemma.

\begin{lemma}[\cite{ChenMS16}]
\label{lem:partiallem}
If there exists a pair $(G,T)$ that is weakly $(\mu, \beta)$-critical then there exists a collection $S_1, S_2, \ldots S_{\lceil 2m^*/\alpha \rceil}$ of pairwise disjoint sets of $\alpha$-tight jobs  where $I(S_1) \subseteq I(S_2) \subseteq  \ldots \subseteq  I(S_{\lceil 2m^*/\alpha \rceil}) $.  Further if $|I(S_{\lceil 2m^*/\alpha \rceil})| \geq \gamma |I(S_1)|$ then  $m^* \geq \Omega(\frac{\mu}{\log_{\gamma} \frac{1}{\beta}})$.
\end{lemma}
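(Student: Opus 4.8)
The plan is to reconstruct the argument of \cite{ChenMS16} that underlies this lemma. It has two pieces — the construction of the nested chain $S_1\subseteq\dots\subseteq S_{\lceil 2m^*/\alpha\rceil}$, and the conditional bound $m^*=\Omega(\mu/\log_\gamma(1/\beta))$ given that chain — and in each piece the only place the weakly-critical hypothesis is used is condition 2 of Definition \ref{def:kevin2}, namely the aggregate bound $|T|\ge(\beta/\mu)\sum_{j\in G}\ell_j$; in the strongly-critical analogue the per-job bound $|T\cap I(j)|\ge\beta\ell_j$ sits in exactly the same spot. I would therefore reuse the CMS chain construction and the CMS amplification essentially verbatim, isolating those two spots and checking that the aggregate bound is what is actually needed there. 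This is the same kind of substitution by which Theorem \ref{thm:our-1} was obtained from Theorem \ref{thm:kevin-1} (there, $\alpha$-tightness was pushed through the chain-growth estimate).

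For the first assertion I would fix a feasible offline schedule on $m^*$ machines and extract, out of $G$, pairwise-disjoint sets $S_1,\dots,S_N$ with $N=\lceil 2m^*/\alpha\rceil$, each a set of $\alpha$-tight jobs whose lifespans cover $T$ and nest as stated; one may assume, by the standard latest-deadline greedy thinning already used in the proof of Lemma \ref{lem:lowergamma}, that no instant lies in more than two lifespans of a given $S_k$. The only place condition 2 is invoked is in certifying that the extraction survives all $N$ rounds: $\alpha$-tightness forces $\ell_j<(1-\alpha)|I(j)|$, so each extracted set consumes only a controlled chunk of laxity, while the weakly-critical inequality $\sum_{j\in G}\ell_j\le(\mu/\beta)|T|$ caps the total laxity available, and CMS's count shows that this leaves enough material for $\Theta(m^*/\alpha)$ disjoint rounds against the $m^*$-machine schedule. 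I would quote this construction and verify only that it cites condition 2 of Definition \ref{def:kevin2} in that single spot.

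For the second assertion I would argue by contradiction and follow the recursive amplification of \cite{ChenMS16}: a single nested chain whose end-to-end length ratio is at least $\gamma$ is converted into an $\Theta(\log_\gamma(1/\beta))$-level descent through ever-shorter nested windows, each still carrying an essentially $\mu$-fold covering by $\alpha$-tight jobs together with a weakly-critical-type bound, until at the bottom the $\beta$-slack is exhausted and a direct volume-versus-capacity count against $m^*$ machines — of the same type as the case analysis in the proof of Lemma \ref{lem:lowergamma}, where the work forced into a short window is shown to exceed $m^*$ times its length — forces $\mu=O\big(m^*\log_\gamma(1/\beta)\big)$, that is, $m^*=\Omega(\mu/\log_\gamma(1/\beta))$. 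Reconstructing this descent is the main obstacle, and it is precisely the step at which extending Theorem \ref{thm:kevin-2} parallels extending Theorem \ref{thm:kevin-1}: one must check that the recursion still closes when condition 2 is the aggregate bound rather than the per-job bound, since it is the per-job form that makes it transparent that shrinking a window by a constant factor costs only a constant factor of $\beta$-slack, and one has to see that the aggregate form degrades in the same graceful way. Everything else — the two-fold thinning, the conversion of $\alpha$-tightness into a lower bound on the work forced inside a short window, and the final count against $m^*$ machines — is the same bookkeeping as in the proof of Lemma \ref{lem:lowergamma}.
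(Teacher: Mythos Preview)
The paper does not prove this lemma at all: it is stated with the citation \cite{ChenMS16} and invoked as a black box, with the only surrounding text being ``The proof in \cite{ChenMS16} shows the following lemma'' followed immediately by ``Combining this lemma with Lemma~\ref{lem:lowergamma} proves Theorem~\ref{thm:our-2}.'' So there is no in-paper argument to compare your proposal against; the paper's entire contribution on this front is Lemma~\ref{lem:lowergamma}, which sharpens the growth factor $\gamma$ from $2$ to $\Theta(1/(1-\alpha))$, and everything else is imported wholesale from \cite{ChenMS16}.

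Your proposal goes well beyond what the paper does by attempting to reconstruct the CMS argument itself. As a sketch it is in the right spirit --- extract nested covering sets from a feasible $m^*$-machine schedule, then run an iterated refinement that loses a factor of $\gamma$ in window length per round until $\beta$ is exhausted --- but it is still only a plan, and you flag the genuine difficulty yourself: verifying that the aggregate laxity bound of Definition~\ref{def:kevin2} (rather than the per-job bound) propagates correctly through the recursion is the nontrivial step, and your proposal does not carry it out. If the goal is to match the present paper, you can simply cite \cite{ChenMS16}; if the goal is a self-contained writeup, you would need to actually execute the descent and check that condition~2 in its weak form survives each refinement step with only a constant-factor loss.
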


Combining this lemma with Lemma~\ref{lem:lowergamma} proves Theorem~\ref{thm:our-2}.

\subsection{Analysis of SJF on Jobs with Similar Relative Laxities}
	\label{sec:analysis}

This section is devoted to proving Lemma~\ref{lem:initial}. Fix $\lambda_1$ and $\lambda_2$ such that all jobs have relative laxity in $ [\lambda_1, \lambda_2]$.  For the lemma it is  assumed that  $[\lambda_1, \lambda_2] \subseteq (0, 1/2]$. Consider such a job instance where all jobs are scheduled  using SJF on $m$ machines and $m^*$ is the minimum number of machines required for any algorithm to feasibly schedule this problem instance. That is, at each time jobs are sorted by their original processing times and the $m$ jobs with smallest processing times are processed on the $m$ machines. Let $j$ be the first job SJF couldn't complete before its deadline. Let $T$ be the set of times when job $j$ was not being processed during its lifespan. Let $G$ denote the set of jobs SJF schedules at times in $T$. Note that there are at least $m$ jobs processed by SJF at each time in $T$, meaning that each time in $T$ is covered by at least $m$ distinct jobs in $G$. Thus, $(G, T)$ satisfies the first property in Definition~\ref{def:kevin2} for $\mu = m$. 
	
	It now remains to show the second property in Definition~\ref{def:kevin2}.  To see this we first upper bound the total size of jobs in $G$. We know that every job $i$ in $G$ is 1/2-tight and no larger than job $j$. Thus, $i$'s lifespan length, $|I(i)| \leq 2 p_i \leq 2 p_j \leq 2 |I(j)|$. Since $i$'s lifespan intersects $j$'s lifespan, $i$'s lifespan must be contained in $(r_j - 2 |I(j)|, d_j + 2 |I(j)|)$, implying that the total  length of jobs in $G$ is at most $5 m^* |I(j)|$. This is because the total amount of work any algorithm with $m^*$ machines can do during $(r_j - 2 |I(j)|, d_j + 2 |I(j)|)$ is upper bounded by $5 m^* |I(j)|$.

This implies that $\sum_{i \in G} \ell_i \leq 5 \lambda_2 m^* |I(j)|$ since a job's laxity is at most $\lambda_2$ times its lifespan. Finally, we know that $|T| \geq \ell_j \geq \lambda_1 |I(j)|$ as $j$'s relative laxity is at least $\lambda_1$. Thus we have, $|T| \geq \frac{\lambda_1}{5\lambda_2} \cdot \frac{1}{m^*} \cdot \sum_{i \in G} \ell_i \geq 
\frac{\lambda_1}{5 \lambda_2} \cdot \frac{1}{m} \cdot \sum_{i \in G} \ell_i$; the last inequality follows since our algorithm uses as many machines as the optimal scheduler. This implies that the second property is satisfied for $\beta = \frac{\lambda_1}{5 \lambda_2}$ and $\mu = m$. Hence the pair $(G, T)$ is weakly $(m, \frac{\lambda_1}{5 \lambda_2})$-critical. Since all jobs are $1 - \lambda_2$ tight because the relative laxities are at most $\lambda_2$, by Theorem~\ref{thm:our-2}, we have $m^* \geq c \cdot  \frac{\mu}{\log_{1 / (1 - \alpha)} 1/ \beta} = c \cdot \frac{m}{\log_{1 / \lambda_2} 5 \lambda_2 / \lambda_1}$ for a certain constant $c > 0$. Thus, we have $m \leq \frac{1}{c} \cdot(\log_{1 / \lambda_2} 5 \lambda_2 / \lambda_1) \cdot m^*$. Therefore, we had run SJF on more than $\frac{1}{c} \cdot (\log_{1 / \lambda_2} 5 \lambda_2 / \lambda_1) \cdot m^*$ machines, we would get a contradiction. 
%a contradiction when using $\Theta( (\log_{1 / \lambda_2} \lambda_2/\lambda_1)m^*)$ machines for an appropriate choice of a constant. 
This completes the proof of Lemma~\ref{lem:initial}.

\subsection{Analysis of the CMS Algorithm on Very Tight Jobs and EDF on Loose Jobs}
\label{sec:theoremadapt}

First consider the analysis of CMS on very tight jobs. Observe that, in Theorem ~\ref{thm:our-1},
 the base of the logarithm is $m$.  Further, the analysis of \cite{ChenMS16} gives a $(\mu, 1 / \mu)$-critical pair where $\mu = m'+1$ if the algorithm cannot feasibly schedule all of the jobs on $m'$ machines.  If CMS uses $\Theta(m^*)$ machines then this implies that all jobs can be feasibly scheduled since otherwise the theorem would give a contradiction.

The paper of \cite{ChenMS16} 
shows that for any $\alpha \in (0, 1)$, the algorithm Earliest Deadline First (EDF) is a $1 / (1 - \alpha)^2$-machine optimal when all jobs are $\alpha$-loose. 
Hence EDF will feasibly schedule all $1/2$-loose jobs on $4 m^*$  machines.

\section{Conclusion}

This paper shows that if a given set of jobs can be feasibly scheduled on $m$ machines by any algorithm then there is an online algorithm that will feasibly schedule the jobs on $O(m \log \log m)$ machines.   We point out two exciting open questions remaining in this line of work. One is to reduce the number of machines required to feasibly schedule the jobs to $O(m)$.  The other is to give a feasibility test for the algorithm.  That is, given a task system, determine if the algorithm will feasibly schedule the jobs arising from this task system.

\bibliographystyle{IEEEtran}
\bibliography{onlinemachine}

%\appendix

%
% <OR> manually copy in the resultant .bbl file
% set second argument of \begin to the number of references
% (used to reserve space for the reference number labels box)

%\begin{thebibliography}{1}
%
%\bibitem{IEEEhowto:kopka}
%H.~Kopka and P.~W. Daly, \emph{A Guide to \LaTeX}, 3rd~ed.\hskip 1em plus
%  0.5em minus 0.4em\relax Harlow, England: Addison-Wesley, 1999.
%
%\end{thebibliography}

% that's all folks
\end{document}